 \newtheorem{example}{Example}
 \newtheorem{definition}{Definition}
\newtheorem{lemma-carelation}{Lemma}
 \newtheorem{theorem}{Theorem}
\begin{document}


\title{Efficient Reachability Ratio Computation for 2-hop Labeling Scheme}

\author{Junfeng~Zhou,
        Xian~Tang,
        Ming~Du
\IEEEcompsocitemizethanks{\IEEEcompsocthanksitem Junfeng Zhou and Ming Du  were with the School of Computer Science and Technology, Donghua University, Shanghai, China 201620.\protect\\
E-mail: \{zhoujf,duming\}@dhu.edu.cn.}

\IEEEcompsocitemizethanks{\IEEEcompsocthanksitem Xian Tang was with Shanghai University of Engineering Science, Shanghai, China 201620.\protect\\
E-mail: tangxian@sues.edu.cn.}


}

\markboth{Journal of \LaTeX\ Class Files,~Vol.~14, No.~8, August~2015}%
{Shell \MakeLowercase{\textit{et al.}}: Bare Demo of IEEEtran.cls for Computer Society Journals}



%
%
%

\newcounter{remark}[section]
\renewcommand{\theremark}{\nthesection.\arabic{remark}}
\newenvironment{remark}{\begin{em}
        \refstepcounter{remark}
        {\vspace{1ex}\noindent\bf Remark \theremark:}}{
        \end{em}\eop\vspace{1ex}} 

\newcommand{\proofsketch}{\noindent{\bf Proof Sketch: }}
\newcommand{\myproof}{\noindent{\bf Proof: }}

\newcommand{\nthesection}{\arabic{section}}

\newcommand{\eop}{\hspace*{\fill}\mbox{$\Box$}}

\newcommand{\stitle}[1]{\vspace{1ex} \noindent{\bf #1}}

\newcommand{\sstitle}[1]{\vspace{1ex} \noindent{\textit{ #1}}}

\newcommand{\ssstitle}[1]{\vspace{1ex} \noindent{\underline{ #1}}}

\newcommand{\kw}[1]{{\ensuremath {\mathsf{#1}}}\xspace}
\newcommand{\kwnospace}[1]{{\ensuremath {\mathsf{#1}}}}
\newcommand{\ltt}{\kw{LTT}}
\newcommand{\arr}{\kw{arrive}}
\newcommand{\vp}{\kw{p}}

\newcommand{\bellf}{{\sc Bellman-Ford}\xspace}
\newcommand{\bfalgo}{{\sc OR}\xspace}

\newcommand{\aalgo}{{\sc KDXZ}\xspace}

\newcommand{\dijk}{{\sc Dijkstra}\xspace}
\newcommand{\dalgo}{{\sc Two-Step-LTT}\xspace}

\newcommand{\dtalgo}{{\sc DOT}\xspace}

\newcommand{\genfunc}{{\sl timeRefinement}\xspace}
\newcommand{\pathc}{{\sl pathSelection}\xspace}
\newcommand{\fifo}{{\sl FIFO}\xspace}

\newcommand{\st}{starting time\xspace}
\newcommand{\sti}{starting-time interval\xspace}
\newcommand{\stsi}{starting-time subinterval\xspace}
\newcommand{\stsis}{starting-time subintervals\xspace}
\newcommand{\stis}{starting-time intervals\xspace}
\newcommand{\ti}{time interval\xspace}
\newcommand{\tis}{time intervals\xspace}
\newcommand{\ttime}{travel time\xspace}
\newcommand{\ttimea}{travel time\xspace}
\newcommand{\at}{arrival time\xspace}
\newcommand{\ata}{arrival-time\xspace}
\newcommand{\atf}{arrival-time function\xspace}
\newcommand{\ati}{arrival-time interval\xspace}
\newcommand{\ats}{arrival times\xspace}
\newcommand{\ed}{edge delay\xspace}
\newcommand{\eda}{edge-delay\xspace}
\newcommand{\edf}{edge-delay function\xspace}
\newcommand{\eds}{edge delays\xspace}
\newcommand{\wt}{waiting time\xspace}

\newcommand{\g}{\overline{g}}
\newcommand{\iend}{\tau}

\newcommand{\argmin}{\operatornamewithlimits{argmin}}

\newcommand{\myhead}[1]{\vspace{.05in} \noindent {\bf #1.}~~}
\newcommand{\cond}[1]{(\emph{#1})~}
\newcommand{\op}[1]{(\emph{#1})~}

\newcommand{\qc}{\ensuremath{Q^c}}

\newcommand{\rewrite}{\kw{XPathToReg}}

\newcommand{\upparen}[1]{\ensuremath{\mathrm{(}}{#1}\ensuremath{\mathrm{)}}}
\newcommand{\func}[2]{\funcname{#1}\upparen{\ensuremath{#2}}}
\newcommand{\funcname}[1]{\ensuremath{\mathit{#1}}}

\newcommand\AS{\textbf{as}\ }
\newcommand{\xsltsize}{\small}

\newcommand{\X}{{\cal X}}
\newcommand{\sem}[1]{[\![#1]\!]}
\newcommand{\NN}[2]{#1\sem{#2}}
\newcommand{\pcdata}{{\tt str}\xspace}

\newcommand{\exa}[2]{{\tt\begin{tabbing}\hspace{#1}\=\hspace{#1}\=\+\kill #2\end{tabbing}}}
\newcommand{\ra}{\rightarrow}
\newcommand{\la}{\leftarrow}
\newcommand{\rsa}{\_} 
\newcommand{\Ed}[2]{E_{{\scriptsize \mbox{#1} \rsa \mbox{#2}}}}
\newenvironment{bi}{\begin{itemize}
        \setlength{\topsep}{0.5ex}\setlength{\itemsep}{0ex}\vspace{-0.6ex}}
        {\end{itemize}\vspace{-1ex}}
\newenvironment{be}{\begin{enumerate}
        \setlength{\topsep}{0.5ex}\setlength{\itemsep}{0ex}\vspace{-0.6ex}}
        {\end{itemize}\vspace{-1ex}}
\newcommand{\ei}{\end{itemize}}
\newcommand{\ee}{\end{enumerate}}

\newcommand{\mat}[2]{{\begin{tabbing}\hspace{#1}\=\+\kill #2\end{tabbing}}}
\newcommand{\m}{\hspace{0.05in}}
\newcommand{\ls}{\hspace{0.1in}}
\newcommand{\beqn}{\begin{eqnarray*}}
\newcommand{\eeqn}{\end{eqnarray*}}

\newcounter{ccc}
\newcommand{\bcc}{\setcounter{ccc}{1}\theccc.}
\newcommand{\icc}{\addtocounter{ccc}{1}\theccc.}

\newcommand{\oneurl}[1]{\texttt{#1}}
\newcommand{\tabstrut}{\rule{0pt}{4pt}\vspace{-0.1in}}
\newcommand{\tabstruct}{\rule{0pt}{8pt}\\[-2ex]}
\newcommand{\stab}{\rule{0pt}{8pt}\\[-2.2ex]}
\newcommand{\sstab}{\rule{0pt}{8pt}\\[-2.2ex]}

\newcommand{\eat}[1]{}


\def\subfigcapskip{2pt}


\newcommand{\rdms}{{\sc rdbms}\xspace}
\newcommand{\sql}{{\sc sql}\xspace}
\newcommand{\dbms}{{\sc dbms}\xspace}

\newcommand{\cfig}{Fig.~}
\newcommand{\ctab}{Table~}
\newcommand{\csec}{Section~}
\newcommand{\cdef}{Definition~}
\newcommand{\cthm}{Theorem~}
\newcommand{\clem}{Lemma~}
\newcommand{\cequ}[1]{Equation~(#1)}
\newcommand{\SG}{\mathbf{SG}}
\newcommand{\SA}{\mathbf{SA}}
\renewcommand{\AA}{\mathbf{AA}}

\newcommand{\xml}{{\sl XML}\xspace}
\newcommand{\xlink}{{\sl XLink}\xspace}
\newcommand{\xpath}{{\sl XPath}\xspace}
\newcommand{\xpointer}{{\sl XPointer}\xspace}
\newcommand{\rdf}{{\sl RDF}\xspace}

\newcommand{\tcv}[1]{TC(#1)}
\newcommand{\rtcv}[1]{TC^{-1}(#1)}
\newcommand{\tc}{{\sl TC}}
\newcommand{\tcnode}{{\sl TC}}
\newcommand{\tcnodereverse}{{\sl TC$^{-1}$}}
\newcommand{\tr}{{\sl TR}\xspace}
\newcommand{\er}{{\sl ER}\xspace}
\newcommand{\bfs}{{\sl BFS}\xspace}
\newcommand{\DAG}{{\sl DAG}\xspace}
\newcommand{\DAGs}{{\sl DAG}s\xspace}
\newcommand{\yesgrail}{{\sl Yes-GRAIL}\xspace}
\newcommand{\code}{\kw{code}}
\newcommand{\sit}{\kw{sit}}
\newcommand{\psit}{{\cal P}_{sit}}
\newcommand{\yescode}{{\sl Yes-Label}\xspace}
\newcommand{\nocode}{{\sl No-Label}\xspace}
\newcommand{\entry}{\kw{entry}\xspace}
\newcommand{\yngindex}{{\sl YNG-Index}\xspace}
\newcommand{\rqrun}{{\sl RQ-Run}\xspace}
\newcommand{\citeseerx}{{\sl citeseerx}\xspace}
\newcommand{\gouniprot}{{\sl go-uniprot}\xspace}
\newcommand{\uniprot}{{\sl uniprot150}\xspace}

\long\def\comment#1{}

\newcommand{\PMR}{{\sl LPM}\xspace}
\newcommand{\CN}{\kw{CN}}
\newcommand{\CNs}{\kwnospace{CN}s\xspace}
\newcommand{\RN}{\kw{RN}}
\newcommand{\RNs}{\kwnospace{RN}s\xspace}
\newcommand{\RRN}{\kw{RRN}}
\newcommand{\RRNs}{\kwnospace{RRN}s\xspace}

\newcommand{\scc}{{\sl SCC}\xspace}
\newcommand{\sccs}{{\sl SCC}s\xspace}
\newcommand{\sccg}{\kwnospace{SCC}\textrm{-}\kw{Graph}}
\newcommand{\strongc}{\leftrightarrow}
\newcommand{\nstrongc}{\nleftrightarrow}
\newcommand{\emscc}{\kwnospace{EM}\textrm{-}\kw{SCC}}
\newcommand{\dfsscc}{\kwnospace{DFS}\textrm{-}\kw{SCC}}
\newcommand{\dfstree}{\kwnospace{DFS}\textrm{-}\kw{Tree}}
\newcommand{\len}{\kw{len}}
\newcommand{\dep}{\kw{depth}}
\newcommand{\tdep}{\kw{drank}}
\newcommand{\tlink}{\kw{dlink}}
\newcommand{\vedges}{up-edges\xspace}
\newcommand{\vedge}{up-edge\xspace}
\newcommand{\cvedge}{Up-Edge\xspace}

\newcommand{\drsscc}{\kwnospace{1P}\textrm{-}\kw{SCC}}
\newcommand{\drssccb}{\kwnospace{1PB}\textrm{-}\kw{SCC}}

\newcommand{\Bdrsscc}{\kwnospace{B}\textrm{-}\kwnospace{BR'}\textrm{-}\kw{SCC}}

\newcommand{\deprtree}{depth-ranked tree\xspace}
\newcommand{\cdeprtree}{Depth-Ranked Tree\xspace}
\newcommand{\drtree}{\kwnospace{BR}\textrm{-}\kw{Tree}}
\newcommand{\drplustree}{\kwnospace{BR}$^+$\textrm{-}\kw{Tree}}
\newcommand{\drscc}{\kwnospace{2P}\textrm{-}\kw{SCC}}
\newcommand{\updatedrank}{\kwnospace{update}\textrm{-}\kw{drank}}

\newcommand{\drtreeconstruct}{\kwnospace{Tree}\textrm{-}\kw{Construction}}
\newcommand{\drtreesearch}{\kwnospace{Tree}\textrm{-}\kw{Search}}
\newcommand{\depthrerank}{\kw{pushdown}}
\newcommand{\itrerank}{\kwnospace{iterative}\textrm{-}\kw{rerank}}
\newcommand{\drr}{\Downarrow}
\newcommand{\earlyrejection}{\kwnospace{early}\textrm{-}\kw{rejection}}
\newcommand{\earlyacceptance}{\kwnospace{early}\textrm{-}\kw{acceptance}}
\newcommand{\greduce}{\earlyacceptance}
\newcommand{\drea}{\kwnospace{1P}\textrm{/}\kw{ER}}
\newcommand{\myinf}{\kw{INF}}

\newcommand{\cedge}{\kwnospace{S}\textrm{-}\kw{edge}}
\newcommand{\cedges}{\kwnospace{S}\textrm{-}\kw{edges}}
\newcommand{\cgraph}{\kw{S\textrm{-}Graph}}
\newcommand{\csgraph}{\kwnospace{S}$^*$\textrm{-}\kw{Graph}}
\newcommand{\pushup}{\kw{pushup}}

\newcommand{\sibeynalgo}{\kw{EdgeByEdge}}
\newcommand{\sibeynalgoplus}{\kw{EdgeByBatch}}
\newcommand{\dfstrees}{\kwnospace{DFS}\textrm{-}\kw{Trees}}
\newcommand{\dfsforest}{\kwnospace{DFS}\textrm{-}\kw{Forest}}
\newcommand{\dfsstart}{\kwnospace{DFS}$^*$\textrm{-}\kw{Tree}}

\newcommand{\cdivide}{\kwnospace{Divide}\textrm{-}\kw{Star}}
\newcommand{\cdividetp}{\kwnospace{Divide}\textrm{-}\kw{TD}}
\newcommand{\cdividehie}{\kwnospace{Divide}\textrm{-}\kw{Cut}}

\newcommand{\merge}{\kw{Merge}}
\newcommand{\mydivide}{\kw{Divide}}

\newcommand{\restructure}{\kw{Restructure}}
\newcommand{\divideconquerdfs}{\kw{DivideConquerDFS}}
\newcommand{\divideconquer}{\kw{DivideConquer}}

\newcommand{\pcontract}{Contractible}
\newcommand{\ponsistent}{Consistent}
\newcommand{\ppreserve}{DFS-Preservable}
\newcommand{\intersect}{\cap}
\newcommand{\semidfs}{\kwnospace{SEMI}\textrm{-}\kw{DFS}}


\newcommand{\butc}{{\sl buTC}\xspace}
\newcommand{\butco}{{\sl buTC$^+$}\xspace}

\newcommand{\lowerb}[1]{{\kw{lb}}$(#1)$\xspace}
\newcommand{\upperb}[2]{{\kw{ub}}$(#1)$\xspace}

\newcommand{\sd}[2]{d({#1},{#2})}

\newcommand{\phoplabel}[1]{\mathcal{L}_{#1}}
\newcommand{\phopNum}[1]{N_{#1}}
\newcommand{\phopNk}[1]{n_{#1}}
\newcommand{\Sk}[1]{S_{#1}}

\newcommand{\parta}[1]{\mathcal{A}{(#1)}}
\newcommand{\partd}[1]{\mathcal{D}{(#1)}}
\newcommand{\partVA}[1]{\mathcal{P}_A{(#1)}}
\newcommand{\partVD}[1]{\mathcal{P}_D{(#1)}}

\newcommand{\tcset}[1]{TC{(#1)}}
\newcommand{\aset}[1]{A_{#1}}
\newcommand{\dset}[1]{D_{#1}}
\newcommand{\ahash}[1]{H_A^{#1}}
\newcommand{\dhash}[1]{H_D^{#1}}
\newcommand{\sid}[1]{id_A({#1})}
\newcommand{\sidd}[1]{id_D({#1})}

\newcommand{\nolabel}{{\sl No-Label}\xspace}
\newcommand{\yeslabel}{{\sl Yes-Label}\xspace}
\newcommand{\nolabels}{{\sl No-Labels}\xspace}
\newcommand{\yeslabels}{{\sl Yes-Labels}\xspace}

\newcommand{\nodelabel}[1]{L_{#1}}
\newcommand{\nolab}[1]{L^{\times}_{#1}}
\newcommand{\yeslab}[1]{L^{\checkmark}_{#1}}
\newcommand{\yeslabnew}[1]{L^{\checkmark}_{#1}}
\newcommand{\initialint}[1]{\yeslab{#1}=[s_{#1}, e_{#1}]}
\newcommand{\newint}[1]{\yeslabnew{#1}=[s_{#1}, e'_{#1}]}

\newcommand{\notesting}[2]{\emph{NoTest}(\nolab{#1}, \nolab{#2})}
\newcommand{\yestesting}[2]{\emph{YesTest}(\yeslab{#1}, \yeslab{#2})}

\newcommand{\topolevel}[1]{l_{#1}}

\newcommand{\noquery}{{\sl No-query}\xspace}
\newcommand{\yesquery}{{\sl Yes-query}\xspace}
\newcommand{\noqueries}{{\sl No-queries}\xspace}
\newcommand{\yesqueries}{{\sl Yes-queries}\xspace}

\newcommand{\reach}[1]{{\varphi}_{#1}}

\newcommand{\leafdown}{r_{leaf}^{\downarrow}}
\newcommand{\leafup}{r_{leaf}^{\uparrow}}
\newcommand{\topocount}{N_{t}}

\newcommand{\existrch}{{\sl eRch}\xspace}
\newcommand{\newrch}{{\sl IERch}\xspace}
\newcommand{\newrchplus}{{\sl IERch$^+$}\xspace}
\newcommand{\newrchone}{{\sl IERch-O$1$}\xspace}
\newcommand{\newrchtwo}{{\sl IERch-O$2$}\xspace}

\newcommand{\basicexp}{{\sl IExp}\xspace}
\newcommand{\basicnodeexp}{{\sl nodeExp}\xspace}

\newcommand{\intexp}{{\sl IExp$^+$}\xspace}
\newcommand{\nodeexp}{{\sl nodeExp$^+$}\xspace}

\newcommand{\dstree}{{\sl DLTree}\xspace}
\newcommand{\dstrees}{{\sl DLTrees}\xspace}
\newcommand{\dst}[1]{\mathcal{T}_{#1}}
\newcommand{\dt}{{\sl DT}\xspace}

\newcommand{\actprt}[1]{p_{#1}}
\newcommand{\actparent}{{{active parent}}\xspace}
\newcommand{\actparents}{{{active parents}}\xspace}
\newcommand{\actfn}{{{active free node}}\xspace}
\newcommand{\actfns}{{{active free nodes}}\xspace}
\newcommand{\fn}{{{free node}}\xspace}
\newcommand{\fns}{{{free nodes}}\xspace}
\newcommand{\cbn}{{{candidate block node}}\xspace}
\newcommand{\cbns}{{{candidate block nodes}}\xspace}
\newcommand{\bn}{{{block node}}\xspace}
\newcommand{\bns}{{{block nodes}}\xspace}
\newcommand{\cbnS}{\mathcal{C}}
\newcommand{\fnS}{\mathcal{F}}
\newcommand{\phop}{{{partial 2-hop labels}}\xspace}

\newcommand{\hnode}{{{hop-node}}\xspace}
\newcommand{\hnodes}{{{hop-nodes}}\xspace}

\newcommand{\rr}{{{reachability ratio}}\xspace}
\newcommand{\rrs}{{{reachability ratios}}\xspace}
\newcommand{\cs}{{{coverage size}}\xspace}

\newcommand{\bigG}{\mathcal{G}} 
\newcommand{\G}{G} 
\newcommand{\V}{V}
\newcommand{\E}{E}

\newcommand{\Ge}{G^e}
\newcommand{\Ve}{V^e}
\newcommand{\Ee}{E^e}

\newcommand{\Gtr}{G^{tr}} 
\newcommand{\Etr}{E^{tr}}
\newcommand{\Vtr}{V}
\newcommand{\Gtc}{G^{*}} 
\newcommand{\Etc}{E^{*}}

\newcommand{\GR}{G^{\varepsilon}}  
\newcommand{\VR}{V^{\varepsilon}}
\newcommand{\ER}{E^{\varepsilon}}

\newcommand{\GB}{G^b}

\newcommand{\Gminus}{G'}
\newcommand{\Vminus}{V'}
\newcommand{\Eminus}{E'}

\newcommand{\In}[2]{in_{{}}({#2})}
\newcommand{\InStar}[2]{in^*_{{}}({#2})}
\newcommand{\Out}[2]{out_{{}}({#2})}
\newcommand{\OutStar}[2]{out^*_{{}}({#2})}

\newcommand{\T}{T}
\newcommand{\tree}[1]{T_{#1}}
\newcommand{\treeup}[1]{T_{#1^{\utri}}}
\newcommand{\treedown}[1]{T_{#1^{\dtri}}}
\newcommand{\pmrtreeright}[1]{T_{{#1}}}
\newcommand{\pmrtreeleft}[1]{T_{\overline{#1}}}
\newcommand{\potree}[1]{\mathcal{T}_{{#1}}}

\newcommand{\query}[2]{{#1}?\rightsquigarrow{#2}}
\newcommand{\queryfalse}[2]{{#1}\not\rightsquigarrow{#2}}
\newcommand{\querytrue}[2]{{#1}\rightsquigarrow{#2}}
\newcommand{\kstepquery}[3]{{#1}\xrightarrow{?#3}{#2}}
\newcommand{\kquery}{$k$-hop reachability query\xspace}
\newcommand{\kqueries}{$k$-hop reachability queries\xspace}
\newcommand{\tht}{{\sl THT}\xspace}

\newcommand{\utri}{\vartriangle}
\newcommand{\dtri}{\triangledown}
\newcommand{\uroot}[1]{r^{\utri}_{#1}}
\newcommand{\droot}[1]{r^{\dtri}_{#1}}

\newcommand{\lb}[1]{label({#1})}
\newcommand{\lbIn}[2]{L^{{#2}}_{in}({#1})}
\newcommand{\lbOut}[2]{L^{{#2}}_{out}({#1})}
\newcommand{\lbInS}[1]{L^S_{in}({#1})}
\newcommand{\lbOutS}[1]{L^S_{out}({#1})}

\newcommand{\toporight}[1]{#1}
\newcommand{\topoleft}[1]{\overline{#1}}
\newcommand{\rankright}[2]{t_{{#2}}}
\newcommand{\rankleft}[2]{t_{\overline{#2}}}
\newcommand{\rank}[1]{t_{#1}}

\newcommand{\N}[1]{\widetilde{N}({#1})} 
\newcommand{\NCrt}[1]{N(#1)} 
\newcommand{\Vmax}{v_{\max}}
\newcommand{\C}[1]{C_{#1}}
\newcommand{\cost}[1]{N_{#1}}

\newcommand{\rankless}[1]{\prec_{t_{#1}}}

\newcommand{\nU}{n_{u_{\min}}}
\newcommand{\navg}{n_{avg}}
\newcommand{\nr}{n_r}
\newcommand{\nur}{n_{ur}}
\newcommand{\ndr}{n_{dr}}
\newcommand{\nul}{n_{ul}}
\newcommand{\ndl}{n_{dl}}

\newcommand{\cavg}{c_{avg}}

\newcommand{\InS}[1]{S_{in}({#1})}
\newcommand{\OutS}[1]{S_{out}({#1})}

\newcommand{\Stk}{S}

\newcommand{\setS}{S}

\newcommand{\storder}{{\sl DT}-order\xspace}
\newcommand{\storders}{{\sl DT}-orders\xspace}


\newcommand{\dagrdt}{{\sl DAG-Reduction}\xspace}
\newcommand{\cw}{{\sl CW}\xspace}
\newcommand{\gk}{{\sl GK}\xspace}

\newcommand{\ptr}{{\sl PTR}\xspace}
\newcommand{\dfs}{{\sl DFS}\xspace}
\newcommand{\pmrtr}{{\sl \butr}\xspace}

\newcommand{\markcnrrn}{{\sl markCNRRN}\xspace}
\newcommand{\butr}{{\sl buTR}\xspace}
\newcommand{\etr}{{\sl TR-B}\xspace}
\newcommand{\etrplus}{{\sl TR-O}\xspace}
\newcommand{\etrpplus}{{\sl TR-O$^+$}\xspace}
\newcommand{\compress}{\kwnospace{compress}$_{\textsf{R}}$\xspace}
\newcommand{\pmrer}{{\sl linear-ER}\xspace}
\newcommand{\sorter}{{\sl Sort-ER}\xspace}

\newcommand{\blrr}{{\sl blRR}\xspace}
\newcommand{\incrr}{{\sl incRR}\xspace}
\newcommand{\incrrplus}{{\sl incRR$^+$}\xspace}

\newcommand{\rhop}{{\sl R2Hop}\xspace}
\newcommand{\lorch}{{\sl LORch}\xspace}
\newcommand{\osrch}{{\sl OSRch}\xspace}

\newcommand{\grail}{{\sl GRAIL}\xspace}
\newcommand{\grl}{{\sl GRL}\xspace}
\newcommand{\grlgr}{{\sl GRL$_*$}\xspace}
\newcommand{\yesgrl}{{\sl Yes-GRAIL}\xspace}
\newcommand{\yg}{{\sl YG}\xspace}

\newcommand{\sgrail}{{\sl SGrail}\xspace}
\newcommand{\sgrl}{{\sl GB}\xspace}
\newcommand{\sgrlgr}{{\sl GB}$_*$\xspace}

\newcommand{\feline}{{\sl FELINE}\xspace}
\newcommand{\fl}{{\sl FL}\xspace}
\newcommand{\flgr}{{\sl FL$_*$}\xspace}

\newcommand{\ip}{{\sl IP$^+$}\xspace}
\newcommand{\ipgr}{{\sl IP$^+_*$}\xspace}
\newcommand{\sip}{{\sl IP$^+_{B}$}\xspace}
\newcommand{\sipgr}{{\sl IP$^+_{B*}$}\xspace}

\newcommand{\pll}{{\sl PLL}\xspace}
\newcommand{\pllgr}{{\sl PLL$_*$}\xspace}
\newcommand{\dl}{{\sl DL}\xspace}

\newcommand{\ferrari}{{\sl FERRARI-G}\xspace}
\newcommand{\fr}{{\sl FR}\xspace}

\newcommand{\tol}{{\sl TOL}\xspace}

\newcommand{\bfl}{{\sl BFL$^+$}\xspace}

\newcommand{\tf}{{\sl TF}\xspace}
\newcommand{\tfgr}{{\sl TF$_*$}\xspace}
\newcommand{\stf}{{\sl TF$_B$}\xspace}
\newcommand{\stfgr}{{\sl TF$_{B*}$}\xspace}

\newcommand{\avgcost}{\bigtriangleup}

\newcommand{\onl}{{\sl Label+G}\xspace}
\newcommand{\lab}{{\sl Label-Only}\xspace}

\newcommand{\lin}[1]{L_{in}({#1})}
\newcommand{\lout}[1]{L_{out}({#1})}

\newcommand{\largenode}{{\sl Large-Node}\xspace}
\newcommand{\interval}{{\sl Interval}\xspace}
\newcommand{\expinterval}{{\sl Expanded-Tree-Interval}\xspace}

\newcommand{\afterset}[2]{#1_{#2}}
\newcommand{\maxnode}[1]{v_{\max}(#1)}

\newcommand{\range}[1]{I_{#1}}
\newcommand{\newrange}[1]{I'_{#1}}

\newcommand{\notest}{{\sl No-Testing}\xspace}
\newcommand{\yestest}{{\sl Yes-Testing}\xspace}

\newcommand{\node}{node\xspace}
\newcommand{\nodes}{nodes\xspace}

\newcommand{\amaze}{\textsf{amaze}\xspace}
\newcommand{\kegg}{\textsf{kegg}\xspace}
\newcommand{\nasa}{\textsf{nasa}\xspace}
\newcommand{\xmark}{\textsf{xmark}\xspace}
\newcommand{\vchocyc}{\textsf{vchocyc}\xspace}
\newcommand{\mtbrv}{\textsf{mtbrv}\xspace}

\newcommand{\go}{\textsf{go}\xspace}

\newcommand{\anthra}{\textsf{anthra}\xspace}
\newcommand{\agrocyc}{\textsf{agrocyc}\xspace}
\newcommand{\ecoo}{\textsf{ecoo}\xspace}

\newcommand{\mail}{\textsf{email}\xspace}

\newcommand{\wiki}{\textsf{wiki}\xspace}
\newcommand{\lj}{\textsf{LJ}\xspace}
\newcommand{\web}{\textsf{web}\xspace}
\newcommand{\yago}{\textsf{yago}\xspace}
\newcommand{\dbp}{\textsf{dbpedia}\xspace}
\newcommand{\pubmed}{\textsf{pubmed}\xspace}
\newcommand{\gov}{\textsf{govwild}\xspace}
\newcommand{\human}{\textsf{human}\xspace}
\newcommand{\twitter}{\textsf{twitter}\xspace}
\newcommand{\citeseer}{\textsf{citeseer}\xspace}
\newcommand{\arxiv}{\textsf{arxiv}\xspace}

\newcommand{\uniptwo}{\textsf{uniprot22m}\xspace}
\newcommand{\citeseerten}{\textsf{10citeseerx}\xspace}
\newcommand{\patten}{\textsf{10cit-Patent}\xspace}
\newcommand{\goten}{\textsf{10go-unip}\xspace}
\newcommand{\citeseerfive}{\textsf{05citeseerx}\xspace}

\newcommand{\patfive}{\textsf{05cit-Patent}\xspace}
\newcommand{\unipten}{\textsf{uniprot100m}\xspace}
\newcommand{\gounip}{\textsf{go\_uniprot}\xspace}
\newcommand{\pat}{\textsf{patent}\xspace}
\newcommand{\citeseerxx}{\textsf{citeseerx}\xspace}
\newcommand{\unipfifteen}{\textsf{uniprot150m}\xspace}
\newcommand{\webuk}{\textsf{web-uk}\xspace}

\newcommand{\rone}{\textsf{1M2x}\xspace}
\newcommand{\rtwo}{\textsf{1M50x}\xspace}
\newcommand{\rthree}{\textsf{1M100x}\xspace}
\newcommand{\rfour}{\textsf{1M150x}\xspace}
\newcommand{\rfive}{\textsf{1M200x}\xspace}

\newcommand{\cwo}{{\sl CWO}\xspace}
\newcommand{\kreach}{{\sl kReach}\xspace}
\newcommand{\bfsi}{{\sl BFSI-B}\xspace}
\newcommand{\htt}{{\sl HT}\xspace}

\newcommand{\onexone}{\textsf{1M-1M}\xspace}
\newcommand{\onexfive}{\textsf{1M-5M}\xspace}
\newcommand{\onexten}{\textsf{1M-10M}\xspace}
\newcommand{\onexfifteen}{\textsf{1M-15M}\xspace}
\newcommand{\onextwenty}{\textsf{1M-20M}\xspace}
\newcommand{\tenxfifty}{\textsf{10M-50M}\xspace}
\newcommand{\twentyxhundrand}{\textsf{20M-100M}\xspace}
\newcommand{\thirtyxonefifty}{\textsf{30M-150M}\xspace}
\newcommand{\fourtyxtentyzero}{\textsf{40M-200M}\xspace}

\IEEEtitleabstractindextext{%
\begin{abstract}
As one of the fundamental graph operations, reachability queries processing has been extensively studied during the past decades. Many approaches followed the line of designing 2-hop labels to make acceleration. Considering that the index size cannot be bounded when using all nodes to construct 2-hop labels, researchers proposed to use a part of important nodes to construct 2-hop labels (\phop) to cover as much reachability information as possible. Then, we may achieve better query performance with limited index size and index construction time. However, \phop do not always perform well on different graphs.

In this paper, we focus on the problem of how to efficiently compute \rr, such that to help users determine whether \phop should be used to answer reachability queries for the given graph. Intuitively, \rr denotes the ratio of the number of reachable queries that can be answered by \phop over the total number of reachable queries involved in the given graph. We discuss the difficulties of \rr computation, and propose an incremental-partition algorithm for \rr computation. We show by rich experimental results that our algorithm can efficiently get the result of \rr, and show how the overall query performance is affected by different \phop.
Based on the experimental results, we give out our findings on whether \phop should be used to the given graph for reachability queries processing.
\end{abstract}
\begin{IEEEkeywords}
Reachability Queries Processing, 2-hop Labeling Scheme, Reachability Ratio
\end{IEEEkeywords}

}

\maketitle

\IEEEdisplaynontitleabstractindextext

\IEEEpeerreviewmaketitle

\section{Introduction}
\label{section:intro}

Reachability queries processing is one of the fundamental graph operations and has been extensively studied in the literature~\cite{DBLP:conf/sigmod/AgrawalBJ89, DBLP:conf/icde/ChenC08, DBLP:conf/sigmod/ChengHWF13, DBLP:conf/soda/CohenHKZ02, DBLP:conf/sigmod/JinRDY12, DBLP:journals/tods/JinRXW11, DBLP:journals/pvldb/JinW13, DBLP:conf/sigmod/JinXRF09, DBLP:conf/icde/SeufertABW13, DBLP:conf/sigmod/TrisslL07, DBLP:conf/sigmod/SchaikM11, DBLP:conf/edbt/VelosoCJZ14, DBLP:conf/icde/WangHYYY06, DBLP:journals/pvldb/WeiYLJ14, DBLP:journals/pvldb/YildirimCZ10, DBLP:journals/vldb/YildirimCZ12, DBLP:series/ads/YuC10, DBLP:conf/edbt/ZhangYQZZ12, DBLP:conf/sigmod/ZhuLWX14, DBLP:conf/cikm/YanoAIY13, DBLP:journals/tkde/SuZWY17, DBLP:conf/sigmod/ZhouZYWCT17, DBLP:journals/vldb/ZhouYLWCT18, DBLP:conf/icde/SenguptaBRB19, DBLP:conf/sigmod/Sarisht19, DBLP:journals/access/DuYZTCZ19}.
Given a directed graph, a reachability query $\query{u}{v}$ asks whether~~there exists a directed path from node $u$ to $v$.
It can be used to Semantic Web (RDF), online social networks, biological
networks, ontology, transportation networks, etc, to answer whether two nodes have a certain connection.
It can also be used as a building brick in structured queries answering, such as XQuery\footnote{https://www.w3.org/TR/2017/REC-xquery-31-20170321} or SPARQL\footnote{https://www.w3.org/TR/rdf-sparql-query}.

To answer a given reachability query, researchers have proposed many efficient labeling schemes~\cite{DBLP:conf/sigmod/AgrawalBJ89, DBLP:conf/icde/ChenC08, DBLP:conf/sigmod/ChengHWF13, DBLP:conf/soda/CohenHKZ02, DBLP:conf/sigmod/JinRDY12, DBLP:journals/tods/JinRXW11, DBLP:journals/pvldb/JinW13, DBLP:conf/sigmod/JinXRF09, DBLP:conf/icde/SeufertABW13, DBLP:conf/sigmod/TrisslL07, DBLP:conf/sigmod/SchaikM11, DBLP:conf/edbt/VelosoCJZ14, DBLP:conf/icde/WangHYYY06, DBLP:journals/pvldb/WeiYLJ14, DBLP:journals/pvldb/YildirimCZ10, DBLP:journals/vldb/YildirimCZ12, DBLP:series/ads/YuC10, DBLP:conf/edbt/ZhangYQZZ12, DBLP:conf/sigmod/ZhuLWX14, DBLP:conf/cikm/YanoAIY13, DBLP:journals/tkde/SuZWY17, DBLP:conf/sigmod/ZhouZYWCT17, DBLP:journals/vldb/ZhouYLWCT18, DBLP:conf/icde/SenguptaBRB19, DBLP:conf/sigmod/Sarisht19, DBLP:journals/access/DuYZTCZ19} to make acceleration, among which 2-hop labeling scheme has been widely adopted and was shown to be better than others in many cases~\cite{DBLP:conf/soda/CohenHKZ02, DBLP:journals/pvldb/JinW13, DBLP:conf/sigmod/ChengHWF13,DBLP:conf/cikm/YanoAIY13, DBLP:conf/sigmod/ZhuLWX14, DBLP:conf/icde/SeufertABW13, DBLP:journals/access/DuYZTCZ19}.
Existing approaches that adopt 2-hop labels can be classified into two categories. The first kind of approaches \cite{DBLP:conf/soda/CohenHKZ02, DBLP:conf/sigmod/ChengHWF13,DBLP:journals/pvldb/JinW13, DBLP:conf/cikm/YanoAIY13,DBLP:conf/sigmod/ZhuLWX14} generate 2-hop labels based on all nodes, i.e., the 2-hop labels maintain the whole transitive closure (\tc). For these approaches, a reachability query $\query{u}{v}$ can be answered by comparing the 2-hop labels of $u$ and $v$ without graph traversal. However, the index size cannot be bounded w.r.t. the size of the input graph, and minimizing the size of 2-hop labels is NP-hard~\cite{DBLP:conf/soda/CohenHKZ02}.

Different with~\cite{DBLP:conf/soda/CohenHKZ02, DBLP:conf/sigmod/ChengHWF13,DBLP:journals/pvldb/JinW13, DBLP:conf/cikm/YanoAIY13,DBLP:conf/sigmod/ZhuLWX14}, the second kind of approaches~\cite{DBLP:conf/icde/SeufertABW13, DBLP:journals/access/DuYZTCZ19} do not generate 2-hop labels based on all nodes, but based on a few nodes with large degree. We call these nodes as \hnodes, and call 2-hop labels based on these \hnodes as \phop. 
Compared with the first kind of approaches, the index size of \phop can be bounded, and is usually much smaller than that of the first kind of approaches.
%
%
Even though \phop cannot answer all reachable queries in the whole \tc, it was shown in \cite{DBLP:conf/icde/SeufertABW13, DBLP:journals/access/DuYZTCZ19} that \phop can help improve the query performance significantly by answering most reachable queries for some graphs.

However, for some other graphs, the query performance may degenerate when using \phop~\cite{DBLP:conf/icde/SeufertABW13, DBLP:journals/access/DuYZTCZ19}. The reason lies in that the \rr of \phop changes violently for different graphs. Here, \rr means the ratio of the number of reachable queries that can be answered by \phop over the size of the \tc. Figure~\ref{graph:hopreach} shows the \rr of \phop on three graphs, from which we know that if we construct \phop using four \hnodes with large degree, then the \rr is greater than 90\% on \human and \webuk, meaning that the probability that a given reachable query $q$ can be answered by \phop is greater than 90\%, and is close to 0 on \pat meaning that the probability that $q$ can be answered by \phop is close to 0. In this case, using \phop brings us nothing but additional cost, which may degenerate the overall performance.

\begin{figure}
  \centering
\includegraphics{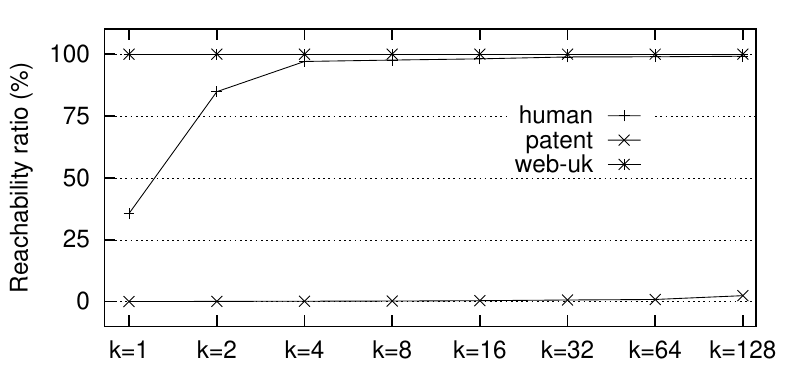}

\setlength{\abovecaptionskip}{-0.5pt}
\caption{Reachability ratio of \phop based on $k$ nodes with large degree.}
\label{graph:hopreach}
\end{figure}

Therefore, before using \phop, a key problem that needs to be solved is: how to efficiently compute the \rr of \phop w.r.t. the given graph? Because only if we know the \rr, we can determine whether we should use it. For example, given the \rr shown in Figure~\ref{graph:hopreach}, we may decide to use \phop on \human and \webuk, but not on \pat, due to that using more \hnodes on \pat cannot increase the \rr significantly. Furthermore, we can determine how many \hnodes should be chosen to construct the \phop, if we decide to use \phop to accelerate the query performance. For example, according to the \rr in Figure~\ref{graph:hopreach}, we may decide to use four \hnodes to construct \phop on \human, but for \webuk, \rr changes little with the increase of \hnodes and one \hnode is good enough, which means a larger \rr and smaller index size.

To the best of our knowledge, this is the first work that addresses the problem of \rr computation, which is not a trivial task and involves two operations. One is computing the size of \tc, the other is computing the exact number of reachable queries that can be answered by the \phop, which we call as the \cs. Considering that \tc~ size computation can be efficiently solved by existing works~\cite{DBLP:journals/cluster/TangCLL19}, the difficulty of \rr computation lies in how to efficiently compute the \cs.
The naive way is first generating \phop based on $k$ selected \hnodes, then getting the \cs by checking all node pairs using the \phop. In this way, the cost of \cs computation is $O(k|V|^2)$ and cannot scale to large graphs, where $V$ is the set of nodes in the input graph. Further, if the \rr is too small to meet the requirement, we may need to increase $k$'s value and repeat the above operation, which makes \rr computation more difficult to be solved.

We propose to compute the \cs incrementally, such that when the value of $k$ changes, we can avoid the costly \cs recomputation, such that to support efficient \rr computation.
The basic idea is, given the \cs w.r.t. $k$ nodes, when we decide to compute the \cs w.r.t. $k+1$ \hnodes, we do not compute the \cs from scratch, but only compute the increased \cs.
However, the increased \cs cannot be easily computed. To know the increased \cs w.r.t. the $(k+1)^{\text{th}}$ \hnode $u$, we need to firstly traverse from $u$ to get a set of nodes $\dset{u}$ that $u$ can reach, then traverse from $u$ backwardly to get the second set of nodes $\aset{u}$ that can reach $u$. Given $\aset{u}$ and $\dset{u}$, we need to check for each pair of nodes $(a,d)$, whether $a$ can reach $d$ can be determined by the current \phop without $u$, where $a\in \aset{u}, d\in \dset{u}$. If the answer is YES, then we know that $a$ can reach $d$ can be answered by the \phop without $u$, and should not be considered when computing the increased \cs w.r.t. $u$. The cost of processing one \hnode $u$ is as high as $O(k|\aset{u}||\dset{u}|)$. Obviously, with the increase of the number of \hnodes for \phop construction, the cost could be unaffordable.
To this problem, we propose to divide both $\aset{u}$ and $\dset{u}$ into a set of disjoint subsets based on equivalence relationship (defined later), such that for each pair of subsets $A_1\subseteq \aset{u}$ and $D_1\subseteq \dset{u}$, we only need to test one reachability query, rather than $|A_1|\times |D_1|$ queries. The cost of \rr computation is, therefore, reduced significantly even when processing large graphs.
We make the following contributions.

\begin{enumerate}
  \item To the best of our knowledge, this is the first work to address the problem of \rr computation.

  \item We propose a set of algorithms for \rr computation. We show that according to the properties of 2-hop labels, the two sets of nodes that can reach and be reached by a certain \hnode can be divided into a set of disjoint subsets, such that the computation cost can be reduced significantly. We prove the correctness and efficiency of our approach.

  \item We conduct rich experiments on real datasets. The experimental results show that compared with the baseline approach, our algorithm works much more efficiently on \rr computation. We also show how the overall query performance is affected by \phop with different number of \hnodes, based on which we give out our findings on whether \phop should be used to the given graph for reachability queries processing.
\end{enumerate}

The remainder of the paper is organized as follows. We discuss the
preliminaries and the related work in Section~\ref{section:preliminary}. In Section~\ref{section:baseline}, we give out the baseline algorithm for \rr computation, and propose the first incremental algorithm in Section~\ref{section:incremental}. After that, we propose the optimized incremental algorithm in Section~\ref{section:optimization}.
We report our experimental
studies in
Section~\ref{section:experiment}, and conclude our paper in
Section~\ref{section:conclusion}.

%
%

\section{Background and Related Work}
\label{section:preliminary}

\subsection{Preliminaries}

Given a directed graph $\bigG$,
we can construct a directed acyclic graph (\DAG) $G$ from $\bigG$ in linear
time~\cite{DBLP:journals/siamcomp/Tarjan72} by coalescing each strongly connected component (\scc) of $\bigG$ into a node in $G$. Then, the reachability query on $\bigG$ can be answered equivalently on $G$.
We follow the tradition and assume that the input graph is a \DAG.

Given a \DAG $G=(V,E)$, where $V$ is the set of nodes and $E$ the set of edges.
We define $\In{G}{u}=\{v|(v,u)\in E\}$ as the
set of in-neighbor \nodes of $u$ in $G$, and $\Out{G}{u}=\{v|(u,v)\in E\}$ the
set of out-neighbor \nodes of $u$. Similarly, we use
$\InStar{G}{u}$ to denote the set of \nodes in $G$ that can reach $u$, and $\OutStar{G}{u}$ the set of \nodes in
$G$ that $u$ can reach.
We say $u$ can reach $v$ ($\querytrue{u}{v}$), if $v\in \OutStar{G}{u}$.

The transitive closure (\tc) of $G$ is $\Gtc=(V, \Etc)$, where $\Etc=\{(u,v)|u,v\in V, v\in \OutStar{G}{u}, v\neq u\}$.
We define $\tcv{u}=\OutStar{G}{u}\setminus\{u\}$ as the transitive closure of $u$, and define $\rtcv{u}=\InStar{G}{u}\setminus\{u\}$ as the reverse \tc~ of $u$.
The \tc~ size of $G$ is denoted as $\tcv{G}=\sum_{u\in V}|\tcv{u}|$.
In~\cite{DBLP:journals/cluster/TangCLL19}, the authors proposed an efficient algorithm for \tc~ size computation with time complexity $O(r|E|)$, where $r$ is the number of distinct paths decomposed from the input
graph. In this paper, we assume that the \tc~ size is given in advance, which can be got by executing the algorithm in~\cite{DBLP:journals/cluster/TangCLL19} as an offline activity. Note that \tc~ size computation is different with \tc~ computation. The former computes $|\tcv{v}|$ for all nodes, while the latter computes $\tcv{v}$ for all nodes with time complexity $O(|V|\times |E|)$.

Given a set of $k$ nodes $\Sk{k}\subseteq V$, we use $\phoplabel{k}$ to denote 2-hop labels constructed based on nodes of $\Sk{k}$, where each node in $\Sk{k}$ is called a \hnode. If $v\in \tcv{u}$ and $\phoplabel{k}$ can correctly tell that $\querytrue{u}{v}$, we say
$\phoplabel{k}$ (or $\Sk{k}$) can cover the reachable query $\querytrue{u}{v}$.
Let $\phopNum{k}$ be the number of distinct reachable queries that can be covered by $\phoplabel{k}$, the \rr of $\Sk{k}$ is defined as Equation~\ref{eq:reachratio}.

\begin{equation}\label{eq:reachratio}
  \alpha = \phopNum{k}/\tcv{G}
\end{equation}

\stitle{Problem Statement:} Given a \DAG $\G=(V,E)$, its \tc~ size and a \hnode set $\Sk{k}\subseteq V$, return the \rr of $\Sk{k}$.

\begin{table}[t]
\caption{Notations}
\label{table:notations}

\centering


\begin{tabular} {|@{}c@{}|l|} \hline

 Notation & ~Description\\

\hline
\hline	$G=(V,E)$	          &	a \DAG with a node set $V$ and an edge set $E$		\\
\hline	$\In{G}{v}(\Out{G}{v})$	      & the set of in-neighbors (out-neighbors) of $v$  \\
\hline	$\InStar{G}{v}(\OutStar{G}{v})$	  & the set of nodes that can reach (be reached by) $v$ \\

\hline  ~$\tcv{v}(\rtcv{v})$~ & $\OutStar{G}{v}\setminus \{v\}(\InStar{G}{v}\setminus \{v\})$ \\
\hline  ~$\tcv{G}$~ & the \tc~ size of $G$ \\
\hline  $\Sk{k}$ & a set of $k$ hop nodes \\
\hline  $\lbOut{v}{k}(\lbIn{v}{k})$ & the 2-hop out (in) label of $v$ w.r.t. $\Sk{k}$ \\
\hline  $\phoplabel{k}$ & the \phop w.r.t. $\Sk{k}$ \\
\hline  $\phopNum{k}$ & the number of reachable queries covered by $\phoplabel{{k}}$ \\
\hline  $\aset{k}$ & ancestor set containing nodes that can reach $v_k$ \\
\hline	$\dset{k}$	& descendant set containing nodes that $v_k$ can reach \\

\hline
\end{tabular}
\end{table}

\subsection{Related Work}
\label{section:related}

As no existing works has addressed \rr computation, we only discuss existing works on reachability queries processing. We discuss these approaches according to whether they use 2-hop labels to answer reachability queries.

\stitle{2-hop based Approaches:}
Cohen et al. proposed to use 2-hop label~\cite{DBLP:conf/soda/CohenHKZ02} to answer reachability queries, where each node $u$ is assigned two labels, one is in-label $\lbIn{u}{}$, and the other is out-label $\lbOut{u}{}$.
$\lbIn{u}{}(\lbOut{u}{})$ consists of a set of nodes $v$ that can reach (be reached by) $u$. Given the 2-hop label, the answering of a reachability query $\query{u}{v}$ can be done by a set intersection operation on two labels, as indicated by Formula~\ref{eq:reachable}.

\begin{equation} \label{eq:reachable}
  \query{u}{v}=\left\{
    \begin{array}{ll}
    \text{TRUE}, & \lbOut{u}{}\bigcap \lbIn{v}{} \neq \emptyset,\\
    \text{FALSE},& \text{otherwise}  \\
    \end{array} \right.
\end{equation}

Existing works involving 2-hop labeling scheme can be classified into two categories. The approaches in the first category construct 2-hop labels based on all nodes~\cite{DBLP:conf/soda/CohenHKZ02,DBLP:conf/sigmod/ChengHWF13,DBLP:journals/pvldb/JinW13, DBLP:conf/cikm/YanoAIY13,DBLP:conf/sigmod/ZhuLWX14}.
Considering that minimizing 2-hop label size is NP-hard~\cite{DBLP:conf/soda/CohenHKZ02}, Cohen et al. proposed a $(\log|V|)$-approximate solution. However, the index construction cost is $O(|V||E|\log(|V|^2/|E|))$, which makes it difficult to scale to large graphs.
Motivated by this, the following works~\cite{DBLP:conf/sigmod/ChengHWF13,DBLP:journals/pvldb/JinW13, DBLP:conf/cikm/YanoAIY13,DBLP:conf/sigmod/ZhuLWX14} have to discard the approximation guarantee and focused on finding better ordering strategy to rank nodes, such that to improve the efficiency of 2-hop label construction. Even though, the index size still cannot be bounded w.r.t. the size of the input graph.

Different with the above approaches, approaches in the second category~\cite{DBLP:conf/icde/SeufertABW13, DBLP:journals/access/DuYZTCZ19} generate \phop based on a few \hnodes to cover as more reachability relationships as possible.
It was shown in \cite{DBLP:conf/icde/SeufertABW13, DBLP:journals/access/DuYZTCZ19} that \phop can work very efficiently in answering reachability queries for some graphs, due to that the \phop can cover most reachability relationships for these graphs. Moreover, the index size of \phop can be bounded, and is usually much smaller than that of the first kind of approaches in practice.
However, for some other graphs, they cannot work efficiently~\cite{DBLP:conf/icde/SeufertABW13, DBLP:journals/access/DuYZTCZ19}. The query performance may degenerate due to small \rr for these graphs, as shown by Fig.~\ref{graph:hopreach}.
Therefore when considering \phop for reachability queries processing, its necessary that we can quickly know what is the \rr w.r.t. a set of \hnodes for the underlying graphs, such that we can correctly decide whether we should use \phop, and further, we can decide how many \hnodes should be chosen to construct \phop.

\stitle{Other Reachability Approaches:} Besides approaches that use (partial) 2-hop labels, researchers also proposed other approaches that do not involve 2-hop labels, including~\cite{DBLP:journals/pvldb/YildirimCZ10, DBLP:journals/vldb/YildirimCZ12,DBLP:conf/edbt/VelosoCJZ14, DBLP:conf/icde/SeufertABW13, DBLP:journals/pvldb/WeiYLJ14, DBLP:conf/edbt/ZhangYQZZ12,DBLP:journals/tkde/SuZWY17}.
These approaches assign each \node $u$ a label that maintains partial \tc. For a given reachability query $\query{u}{v}$, we may need to conduct depth-first search (\dfs) or breadth-first search (\bfs) from $u$ to check whether $u$ can reach $v$, if we cannot get the result by comparing labels of $u$ and $v$.

\begin{figure*}[t]
  \centering
\includegraphics{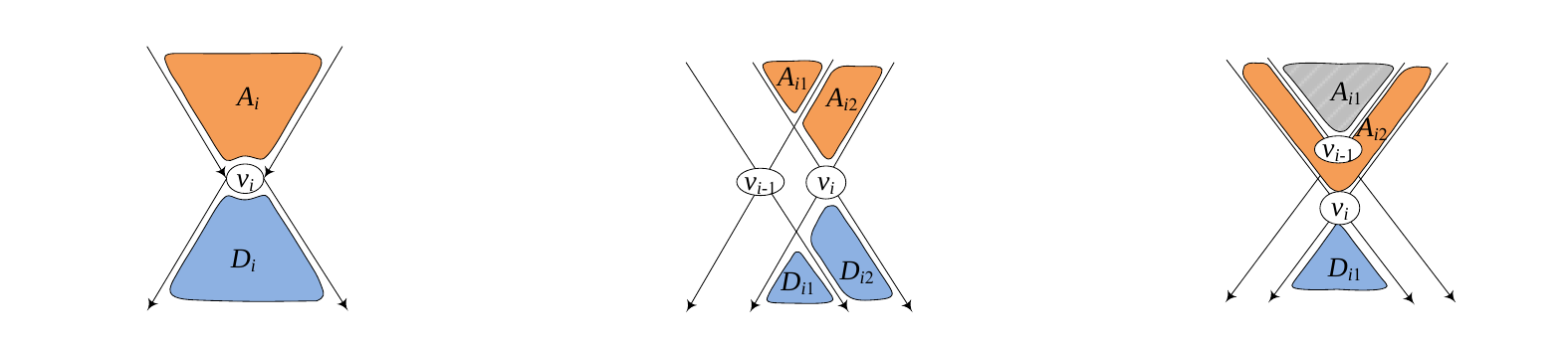}

\flushleft
\quad\quad\qquad\quad\quad(a) $v_i$ is the first node ~\quad\qquad\quad\quad\quad (b) $\queryfalse{v_{i-1}}{v_i}$ and $\queryfalse{v_{i}}{v_{i-1}}$
~~~~\qquad\quad\quad\quad\quad (c) $\querytrue{v_{i-1}}{v_i}$ 

\caption{The relationship between different hop-nodes.}
\label{graph:ourapproach-idea}
\end{figure*}

\section{The Baseline Algorithm}
\label{section:baseline}

To get the \rr of $\Sk{k}$, we need to solve two tasks. One is constructing \phop,
the other is computing the \rr.
In this section, we first analyze the construction of 2-hop labels and the computation of \rr,
then give out the baseline algorithm for \rr computation.

\stitle{Step-1: 2-hop Labels Construction.} To construct 2-hop labels, existing approaches need to sort all nodes based
on a certain rank value, such as degree~\cite{DBLP:journals/pvldb/JinW13, DBLP:conf/cikm/YanoAIY13} or
closeness~\cite{DBLP:conf/sigmod/AkibaIY13}. The result of the sorting operation is $v_1, v_2, ..., \cdots, v_{|V|}$,
where the first (last) node has the largest (smallest) rank value. Based on the sorting result, we select the
first $k$ nodes as \hnodes to
get the \hnode set $\Sk{k}=\{v_1, v_2, ..., v_k\}$. We have the following result w.r.t. the \hnode sets
(Equations~\ref{eq:hnodeset} and \ref{eq:hnodeset-subsumption}).

\begin{equation}\label{eq:hnodeset}
  \emptyset = \Sk{0} \subset \Sk{1} \subset \Sk{2} \subset \cdots \subset \Sk{|V|} = V
\end{equation}

\begin{equation}\label{eq:hnodeset-subsumption}
  \Sk{i}\setminus \Sk{i-1} =\{v_{i}\} \text{, where}~ 0<i\leq |V|
\end{equation}

Given a set $\Sk{i}$ of $i$ \hnodes, its 2-hop labels $\phoplabel{i}$ can be generated by processing $v_i$
based on $\phoplabel{i-1}$
according to Equations~\ref{eq:hnodeset} and \ref{eq:hnodeset-subsumption}.
Specifically, we first perform forward \bfs from $v_i$ to get a set $\dset{i}$ of nodes that $v_i$ can reach.
Second, we perform backward \bfs from $v_i$ to get a set $\aset{i}$ of nodes that can reach $v_i$,
as denoted by Figure~\ref{graph:ourapproach-idea}(a).
We call $\aset{i}$ the ancestor set of $v_i$,
and $\dset{i}$ the descendant set of $v_i$.
For each node $a\in \aset{i}$, we add $v_i$ to $a$'s out-label, i.e., $\lbOut{a}{i}=\lbOut{a}{i-1}\cup\{v_i\}$,
denoting that $a$ can reach
$v_i$. For each node $d\in \dset{i}$, we add $v_i$ to $d$'s in-label, i.e., $\lbIn{d}{i}=\lbIn{d}{i-1}\cup\{v_i\}$,
denoting that $v_i$ can
reach $d$.
After processing $v_i$, we get 2-hop labels $\phoplabel{i}$.
The superscript $i$ in $\lbOut{a}{i}(\lbIn{a}{i})$ denotes that
both 2-hop labels $\lbOut{a}{i}$ and $\lbIn{a}{i}$ w.r.t. node $a$ are
subsets of $\Sk{i}$, i.e., they contain only nodes of $\Sk{i}$. When $i=|V|$, then $\Sk{i}=V$, the 2-hop
labels of $a$ are subsets of $V$. In this case, all nodes are \hnodes and we do not use superscript
in 2-hop labels for simplicity, i.e., $\lbOut{a}{|V|}=\lbOut{a}{}$ and $\lbIn{a}{|V|}=\lbIn{a}{}$.

It is worth noting that we can use 2-hop labels $\phoplabel{i-1}$ to reduce the size of both $\aset{i}$ and $\dset{i}$
by checking whether we can terminate
the \bfs traversal from $v_i$ in advance.
For example, consider Figure~\ref{graph:ourapproach-idea}(c), where both $v_{i-1}$ and $v_i$ are \hnodes
and $v_i$ is processed after $v_{i-1}$.
After processing $v_{i-1}$, we have $\phoplabel{i-1}$.
When processing $v_i$, the backward \bfs traversal from $v_i$ can be terminated at $v_{i-1}$, due to that $v_{i-1}$ can
reach $v_i$ can be answered by $\phoplabel{i-1}$, and $\forall a\in \rtcv{v_{i-1}}$ that
can reach $v_i$ through $v_{i-1}$ can also be answered by $\phoplabel{i-1}$.
Therefore in practice, $\aset{i}\subseteq \InStar{}{v_i}\wedge
\dset{i}\subseteq \OutStar{}{v_i}$.

\begin{figure}
  \centering
\includegraphics{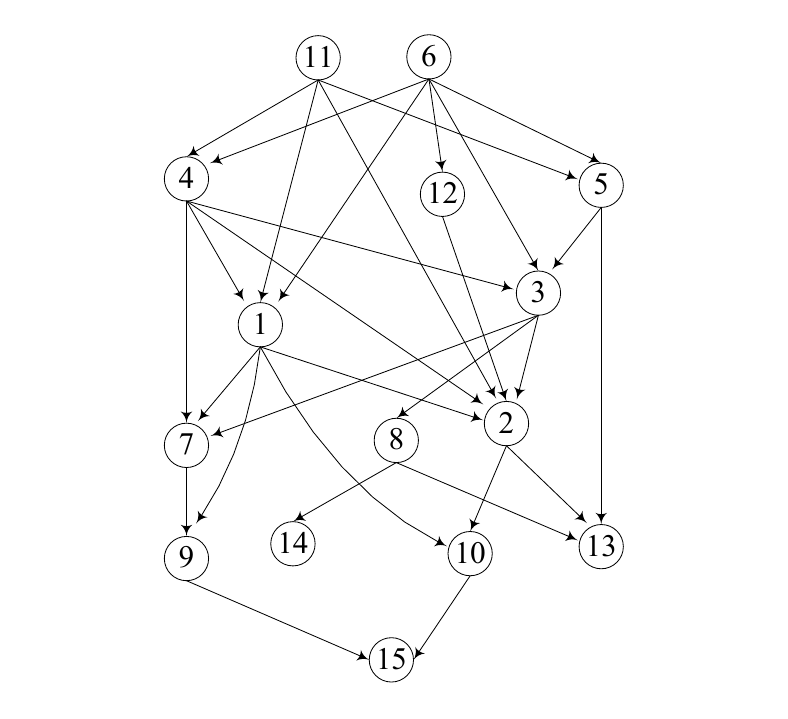}


\caption{A sample \DAG $G$.}
\label{graph:graph}
\end{figure}

\begin{table}[h]
\caption{The \phop $\phoplabel{1},\phoplabel{2},\phoplabel{3}$ constructed based on
$\Sk{1}=\{v_1\},\Sk{2}=\{v_1, v_2\}$ and $\Sk{3}=\{v_1, v_2, v_3\}$, respectively.}
\label{table:graph}

\centering


\begin{tabular} {|c||@{}r|@{}r||@{}r|@{}r||@{}r|@{}r|} \hline

 \raisebox{-2mm}[-10mm][1mm]{Node} & \multicolumn{2}{c||}{$\phoplabel{1}$} & \multicolumn{2}{c||}{$\phoplabel{2}$}&
 \multicolumn{2}{c|}{$\phoplabel{3}$}\\  \cline{2-7}

        & ~$\lbOut{v}{1}$ & ~$\lbIn{v}{1}$   &~$\lbOut{v}{2}$  & ~$\lbIn{v}{2}$ & ~$\lbOut{v}{3}$   &~$\lbIn{v}{3}$   \\

\hline\hline
$v_{1}$	& 	1	& 	1	& 	1	& 	1	& 	1	& 	1	\\	\hline
$v_{2}$	& 		& 	1	& 	2	& 	1,2	& 	2	& 	1,2	\\	\hline
$v_{3}$	& 		& 		& 	2	& 		& 	2,3	& 	3	\\	\hline
$v_{4}$	& 	1	& 		& 	1	& 		& 	1,3	& 		\\	\hline
$v_{5}$	& 		& 		& 	2	& 		& 	2,3	& 		\\	\hline
$v_{6}$	& 	1	& 		& 	1	& 		& 	1,3	& 		\\	\hline
$v_{7}$	& 		& 	1	& 		& 	1	& 		& 	1,3	\\	\hline
$v_{8}$	& 		& 		& 		& 		& 		& 	3	\\	\hline
$v_{9}$	& 		& 	1	& 		& 	1	& 		& 	1,3	\\	\hline
$v_{10}$	& 		& 	1	& 		& 	1,2	& 		& 	1,2	\\	\hline
$v_{11}$	& 	1	& 		& 	1	& 		& 	1,3	& 		\\	\hline
$v_{12}$	& 		& 		& 	2	& 		& 	2	& 		\\	\hline
$v_{13}$	& 		& 	1	& 		& 	1,2	& 		& 	1,2	\\	\hline
$v_{14}$	& 		& 		& 		& 		& 		& 	3	\\	\hline
$v_{15}$	& 		& 	1	& 		& 	1,2	& 		& 	1,2	\\	\hline

\end{tabular}
\end{table}

\vspace{3mm}

\begin{example} \label{example:baseline1} Consider $G$ in Figure~\ref{graph:graph}.
Assume that we want to construct \phop
$\phoplabel{2}$. The first thing we need to do is to sort all nodes by a certain rank
value.
In this paper, we follow the
tradition~\cite{DBLP:conf/cikm/YanoAIY13,DBLP:journals/pvldb/JinW13} and
take $(|\Out{}{v}|+1)\times(|\In{}{v}|+1)$
as $v$'s rank value for sorting. The sorting result is $v_1, v_2, v_3, ..., v_{15}$. To
get $\phoplabel{2}$, we first process $v_1$ by performing both forward and backward
\bfs from $v_1$ to
get $\aset{1}=\{v_1, v_4, v_6, v_{11}\}$ and $\dset{1}=\{v_1, v_2, v_7, v_{9}, v_{10},
v_{13}, v_{15}\}$. After that, we add 1 to out label of nodes in $\aset{1}$
and in label of nodes in $\dset{1}$. Then, we get $\phoplabel{1}$. The next processed
node is $v_2$. Similarly, we perform forward and backward \bfs from $v_2$ to get
$\aset{2}=\{v_2,v_3, v_5, v_{12}\}$ and $\dset{2}=\{v_2, v_{10}, v_{13}, v_{15}\}$. Note that
all nodes in $\aset{1}$ can reach $v_2$, but some of them are not included in $\aset{2}$,
due to that for nodes that are in $\aset{1}$ but not included in $\aset{2}$,
they can reach $v_2$ can be answered by $\phoplabel{1}$. Then, we add 2 to out and in label
of nodes in $\aset{2}$ and $\dset{2}$, as shown by Table~\ref{table:graph}. \hfill ~ $\Box$
\end{example}

\vspace{3mm}

\stitle{Step-2: Reachability Ratio Computation.} Given 2-hop labels $\phoplabel{k}$ w.r.t. $\Sk{k}$,
the baseline approach computes the \rr of $\Sk{k}$ as follow.
First, it computes the set of nodes
that can reach either one of the set of \hnodes, as shown by Equation~\ref{eq:Aunion}.
Second, it computes the set of
nodes that can be reached by either one of the set of \hnodes, as shown by Equation~\ref{eq:Dunion}.
It computes
the number of reachable queries that can be answered by $\phoplabel{k}$, as shown by Equation~\ref{eq:totalcovernumber}.
At last, we return the \rr of 2-hop labels w.r.t. $\Sk{k}$ based on Equation~\ref{eq:reachratio}.

\begin{equation}\label{eq:Aunion}
  A = \bigcup_{i\in [1,k]} \aset{i}
\end{equation}

\begin{equation}\label{eq:Dunion}
  D = \bigcup_{i\in [1,k]} \dset{i}
\end{equation}

\begin{equation}\label{eq:totalcovernumber}
  \begin{split}
    \phopNum{k} = |\{(a,d)|&a\in A, d\in D, a\neq d\, \\
                  & \lbOut{a}{k}\cap \lbIn{d}{k}\neq \emptyset \}|
  \end{split}
\end{equation}

\vspace{3mm}

\begin{example} \label{example:baseline2} Continue Example~\ref{example:baseline1}. To compute the \rr of
$\Sk{2}=\{v_1, v_2\}$, we first compte
$\aset{}=\aset{1}\cup \aset{2} =  \{v_1, v_2,v_3, v_4, v_5, v_6, v_{11}, v_{12}\}$,
$\dset{}=\dset{1}\cup \dset{2} = \{v_1, v_2, v_7, v_{9}, v_{10},
v_{13}, v_{15}\}$ according to Equations~\ref{eq:Aunion} and \ref{eq:Dunion}, respectively.
At last, we check for each pair of nodes $a\in \aset{}$ and $d\in \dset{} (a\neq d)$ , whether
$a$ can reach $d$ can be answered by $\phoplabel{2}$.
And compute the number of answered reachable queries according to
Equation~\ref{eq:totalcovernumber}, which is 42 for $G$ in Figure~\ref{graph:graph} and
$\phoplabel{2}$ in Table~\ref{table:graph}. Given $\tcv{G}=70$, we know that the \rr of $\Sk{2}$
is $42/70=60\%$.
\hfill ~ $\Box$
\end{example}

\vspace{3mm}

\stitle{The Algorithm:} The baseline algorithm to compute \rr is shown in Algorithm~\ref{algorithm:baseline},
which works in two steps. Step-1 (lines 1-17) constructs 2-hop labels $\phoplabel{k}$ of $k$ \hnodes and
gets the two set of nodes $\aset{}$ and $\dset{}$.
Specifically, it first sorts all nodes in certain order in line 2, then selects $k$ \hnodes
in line 3. In lines 4-15, it performs
forward and backward \bfs from each \hnode $v_i$ to construct 2-hop labels $\phoplabel{i}$.
During the processing, only if the reachability
relationship between $v_i$ and the visited node $v$ cannot be answered by 2-hop labels $\phoplabel{i-1}$, it adds
$v_i$ to $v$'s in-label (line 8) or out-label (line 13), and adds $v$ to $\dset{i}$ (line 9)
or $\aset{i}$ (line 14); otherwise, it
terminates the processing due to that the reachability
relationship has already been covered by $\phoplabel{i-1}$.
In lines 16-17, it gets the two sets $\aset{}$ and $\dset{}$ according to Equations~\ref{eq:Aunion} and \ref{eq:Dunion}.
Step-2 (lines 18-20) computes the number of covered reachable queries by $\phoplabel{k}$ according
to Equation~\ref{eq:totalcovernumber}.
Finally, it computes and returns the \rr in line 21.

\begin{algorithm}[t]
 \begin{flushleft}
 \end{flushleft}
\caption{blRR$(G=(V,E), k, \tcv{G})$} \label{algorithm:baseline}

\vspace{-3mm}

~1 \ $\phopNum{k}\leftarrow 0, \Sk{k}\leftarrow \emptyset, \aset{}\leftarrow \emptyset, \dset{}\leftarrow \emptyset$

~2 \ rank nodes in $G$ in a certain order

~3 \ put the first $k$ nodes into $\Sk{k}$ as \hnodes

~4 \ \textbf{foreach} $(v_i\in \Sk{k})$ \textbf{do}

~5 \ \quad \ $\aset{i}\leftarrow \emptyset$; $\dset{i}\leftarrow \emptyset$

~6 \ \quad \ perform forward \bfs from $v_i$, and for each visited $v$

~7 \ \ \quad\quad \textbf{if} $(\lbOut{v_i}{i-1}\bigcap \lbIn{v}{i-1} = \emptyset)$ \textbf{then} \hfill /*$\phoplabel{i-1}$*/

~8 \ \ \quad\quad\quad $\lbIn{v}{i} \leftarrow \lbIn{v}{i-1}\bigcup \{v_{i}\}$ \hfill /*compute $\phoplabel{i}$*/

~9 \ \ \quad\quad\quad $\dset{i} \leftarrow \dset{i} \cup \{v\}$

10 \ \ \quad\quad    \textbf{else} stop expansion from $v$

11 \ \quad perform backward \bfs from $v_i$, and for each visited $v$

12 \ \quad \quad  \textbf{if} $(\lbOut{v}{i-1}\bigcap \lbIn{v_i}{i-1} = \emptyset)$ \textbf{then} \hfill /*$\phoplabel{i-1}$*/

13 \ \quad\quad\quad $\lbOut{v}{i} \leftarrow \lbOut{v}{i-1}\bigcup \{v_i\}$                  \hfill /*compute $\phoplabel{i}$*/

14 \ \ \quad\quad\quad $\aset{i} \leftarrow \aset{i} \cup \{v\}$

15 \ \quad \quad  \textbf{else} stop expansion from $v$

16 \ $A = \bigcup_{i\in [1,k]} \aset{i}$

17 \ $D = \bigcup_{i\in [1,k]} \dset{i}$

18 \ \textbf{foreach} $(a\in \aset{}, d\in \dset{}, a\neq d)$ \textbf{do}

19 \ \ \quad \textbf{if} $(\lbOut{a}{k}\bigcap \lbIn{d}{k} \neq \emptyset)$ \textbf{then}   \hfill /*$\phoplabel{k}$*/

20 \ \ \quad\quad $\phopNum{k} \leftarrow \phopNum{k} + 1$

21 \ \textbf{return} $\alpha \leftarrow \phopNum{k}/\tcv{G}$ as \rr of $\Sk{k}$

\end{algorithm}

\stitle{Analysis:} For Step-1 (lines 1-17), the time cost of line 2 is $O(|V|)$ by counting sort.
The time cost of performing \bfs from each \hnode $v_i$ is $O(|V|+|E|)$ (lines 5-15).
During the two \bfs traversals,
the time cost of processing every visited node $v$ is $O(k)$ (lines 7 and 12). Thus the time cost of 2-hop labels
construction for
each \hnode is $O(k(|V|+|E|))$, and the time cost of processing $k$ \hnodes, i.e., the time cost
of Step-1 is $O(k^2(|V|+|E|))$.
For Step-2, the time cost is $O(k|A| |D|)$.
Therefore, the time complexity of Algorithm~\ref{algorithm:baseline} is $O(k^2(|V|+|E|)+k|A||D|)$.

During the processing, we do not need to actually maintain every $\aset{i}$ and $\dset{i}$, instead,
we only need to maintain $\aset{}$ and $\dset{}$. Further,
we need to maintain the 2-hop labels w.r.t. $k$ \hnodes, the space cost is $O(k|V|)$. As
$\Sk{k}, \aset{}$ and
$\dset{}$ are bounded by $V$, 
the space complexity of Algorithm~\ref{algorithm:baseline} is $O(k|V|)$.

In practice, if the \rr is too small to meet the requirement,
we may need to use more \hnodes, and therefore Algorithm~\ref{algorithm:baseline} will be called once more to compute the new
\rr, for which all reachability relationships tested for $\Sk{k}$ will be tested
again for the new \hnode set.

\vspace{3mm}

\begin{example} \label{example:baseline3} Continue Example~\ref{example:baseline2}.
After getting
$\aset{}=\{v_1, v_2,v_3,$ $ v_4, v_5, v_6, v_{11}, v_{12}\}$ and
$\dset{} = \{v_1, v_2, v_7, v_{9}, v_{10},
v_{13}, v_{15}\}$ during constructing \phop, in lines 18-20, we need to
test 56 reachability queries, due to $|\aset{}|=8$ and $|\dset{}|=7$. By line 21, we know that the \rr is 60\%.
If we set the threshold of
the \rr to be equal or greater than 80\%, then we need to enlarge the \hnode set and
recompute the \rr from scratch. As a result, the 56 queries tested for $\Sk{2}$ will be tested again
for the new \hnode set.
\hfill ~ $\Box$
\end{example}

\vspace{3mm}

\section{The Incremental Approach}
\label{section:incremental}

Considering that when the \hnode set is enlarged, Algorithm~\ref{algorithm:baseline} will be called once more, and the set of reachability relationships tested for the first call will be tested again for the second call, a natural question is: can we compute the \rr incrementally? That is, given the \rr w.r.t. $\Sk{i-1}$, when we decide to compute the \rr w.r.t. $i$ \hnodes, i.e., $\Sk{i}=\Sk{i-1}\cup \{v_{i}\}$, we do not compute the number of covered reachable queries from scratch, instead, we only compute the number of increased reachable queries that cannot be covered by $\phoplabel{i-1}$, but can be covered by $\phoplabel{i}$.

However, the increased \rr cannot be easily computed. On one hand, by constructing 2-hop
labels using \hnode $v_i$, we capture three kinds of
reachability relationships: (1) $v_i$ can reach every node in $\dset{i}\setminus \{v_i\}$ can
be determined by 2-hop labels w.r.t. $v_i$, and the number of covered reachable queries is $|\dset{i}|-1$;
(2) every node in $\aset{i}\setminus \{v_i\}$ can reach $v_i$ can be determined
by 2-hop labels w.r.t. $v_i$, and the number of covered reachable queries is $|\aset{i}|-1$;
and (3) each node in $\aset{i}\setminus \{v_i\}$ can reach every node in $\dset{i}\setminus \{v_i\}$ can be determined by 2-hop labels w.r.t. $v_i$,
and the number of covered reachable queries
is $(|\aset{i}|-1)\times(|\dset{i}|-1)$. Thus the number of covered reachable queries by 2-hop labels w.r.t.
$v_i$ can be computed as $(|\aset{i}|-1)\times(|\dset{i}|-1)+ (|\aset{i}|-1) + (|\dset{i}|-1)=|\aset{i}|\times|\dset{i}|-1$. 

On the other hand, 2-hop labels w.r.t. different \hnodes may cover the same reachable queries.
For example, consider Figure~\ref{graph:ourapproach-idea}(b), where $v_{i-1}$ and $v_i$
are two \hnodes, and $v_i$ is processed after $v_{i-1}$.
After processing $v_{i-1}$, every node $a_1\in \aset{i1}$ can reach every node $d_1\in \dset{i1}$
can be covered by 2-hop labels
w.r.t. $v_{i-1}$, due to that $v_{i-1}\in \lbOut{a_1}{i-1}\cap \lbIn{d_1}{i-1}$.
After processing $v_i$, we also know that $a_1$ can reach
$d_1$ can also be covered by 2-hop labels w.r.t. $v_i$, due to that $v_i\in \lbOut{a_1}{i}\cap \lbIn{d_1}{i}$.
Therefore, the increased number of reachable queries w.r.t. $v_i$ can be computed as Equations~\ref{eq:singlenode} and \ref{eq:singlenode-lambda}, and the total number of reachable queries $\phopNum{k}$ covered by
$\phoplabel{k}$ can be computed as Equation~\ref{eq:singlenode-total}.

\begin{equation}\label{eq:singlenode}
  \phopNk{i} = |\aset{i}|\times|\dset{i}|-1 - \lambda
\end{equation}

\begin{equation}\label{eq:singlenode-lambda}
  \begin{split}
    \lambda = |\{(a,d)|&a\in\aset{i}, d\in\dset{i}, a\neq d, \\
                  & \lbOut{a}{i-1}\cap \lbIn{d}{i-1}\neq \emptyset\}|
  \end{split}
\end{equation}

\begin{equation}\label{eq:singlenode-total}
  \phopNum{k} = \sum_{i\in [1,k]}\phopNk{i}
\end{equation}

\vspace{3mm}

Therefore, to compute the number of reachable queries that cannot be covered by $\phoplabel{i-1}$ but can be covered by $\phoplabel{i}$,
the intuitive way is firstly getting the two sets of nodes $\aset{i}$ and $\dset{i}$, then testing for each pair of nodes $a\in \aset{i}$ and $d\in \dset{i}$, whether $a$ can reach $d$ can be answered by $\phoplabel{i-1}$.
If $a$ can reach $d$ can be answered by $\phoplabel{i-1}$, it means that $\querytrue{a}{d}$ has already been covered by $\Sk{i-1}$;
otherwise, it is a new covered reachable query and needs to be counted in, as shown by Algorithm~\ref{algorithm:icrRR}.

\begin{algorithm}[t]
 \begin{flushleft}
 \end{flushleft}
\caption{incRR$(G=(V,E), k, \tcv{G})$} \label{algorithm:icrRR}

\vspace{-3mm}


~1 \ \ $\phopNum{0}\leftarrow 0$

~2 \ \ rank nodes in $G$ in a certain order

~3 \ \ put the first $k$ nodes into $\Sk{k}$ as \hnodes

~4 \ \ \textbf{foreach} $(v_i\in \Sk{k})$ \textbf{do}

~5 \ \ \quad \ $\aset{i}\leftarrow \emptyset, \dset{i}\leftarrow \emptyset$

~6 \ \ \quad \ perform forward \bfs from $v_i$, and for each visited $v$

~7 \ \ \ \quad\quad \textbf{if} $(\lbOut{v_i}{i-1}\bigcap \lbIn{v}{i-1} = \emptyset)$ \textbf{then} \hfill /*$\phoplabel{i-1}$*/


~8 \ \ \quad\quad\quad $\dset{i} \leftarrow \dset{i} \cup \{v\}$

~9   \ \ \quad\quad    \textbf{else} stop expansion from $v$

10 \ \quad perform backward \bfs from $v_i$, and for each visited $v$

11 \ \quad \quad  \textbf{if} $(\lbOut{v}{i-1}\bigcap \lbIn{v_i}{i-1} = \emptyset)$ \textbf{then} \hfill /*$\phoplabel{i-1}$*/


12 \ \ \quad\quad\quad $\aset{i} \leftarrow \aset{i} \cup \{v\}$

13 \ \quad \quad  \textbf{else} stop expansion from $v$

14 \ \ \quad $\lambda\leftarrow 0$

15 \ \ \quad \textbf{foreach} $(a\in \aset{i}, d\in \dset{i}, a\neq d)$ \textbf{do}

16 \ \ \quad\quad    \textbf{if} $(\lbOut{a}{i-1}\bigcap \lbIn{d}{i-1} \neq \emptyset)$ \textbf{then}   \hfill /*$\phoplabel{i-1}$*/

17 \ \ \quad\quad\quad $\lambda \leftarrow \lambda + 1$

18 \ \ \quad $\phopNk{i}\leftarrow |\aset{i}|\times|\dset{i}|-1-\lambda$

19 \ \ \quad $\phopNum{i} \leftarrow \phopNum{i-1} + \phopNk{i}$

20 \ \ \quad $\alpha \leftarrow \phopNum{i}/\tcv{G}$ \hfill /*\rr of $\Sk{i}$*/

21 \ \ \quad \textbf{foreach} $(a\in \aset{i})$ \textbf{do} \hfill /*compute $\phoplabel{i}$*/

22 \ \ \quad\quad\quad $\lbOut{a}{i} \leftarrow \lbOut{a}{i-1}\bigcup \{v_i\}$

23 \ \ \quad \textbf{foreach} $(d\in \dset{i})$ \textbf{do}\hfill /*compute $\phoplabel{i}$*/

24 \ \ \quad\quad\quad $\lbIn{d}{i} \leftarrow \lbIn{d}{i-1}\bigcup \{v_{i}\}$

25 \ \textbf{return} $\alpha$ as \rr of $\Sk{k}$

\end{algorithm}

In Algorithm~\ref{algorithm:icrRR}, we compute \rr for each $\Sk{i}$ when $v_i(i\in [1,k])$ is added into $\Sk{i-1}$. For each processed $v_i$ (lines 4-24), we first perform forward and backward \bfs from $v_i$ to get the two set of nodes $\aset{i}$ and $\dset{i}$ (lines 5-13). In lines 14-17, we compute the number of reachable queries that can be covered by $\phoplabel{i-1}$.
After that, we get the increased number of reachable queries that can be covered by $\phoplabel{i}$ but cannot be covered by $\phoplabel{i-1}$ in line 18 according to Equation~\ref{eq:singlenode}. In line 19, we get the total number of reachable queries covered by $\phoplabel{i}$, and get the \rr of $\Sk{i}$ in line 20.
We compute $\phoplabel{i}$ based on $\phoplabel{i-1}$ in lines 21-24. At last, we return the \rr of $\Sk{k}$ in line 25.

It is worth noting that for Algorithm~\ref{algorithm:icrRR}, when processing the first \hnode $v_1$, we do not need to actually execute lines 15-17, due to that for $v_1$, $\aset{1}=\InStar{}{v_1}$ and $\dset{1}=\OutStar{}{v_1}$, we can directly get $\phopNk{i}=|\aset{i}|\times|\dset{i}|-1$ and the corresponding \rr.

\stitle{Analysis:} Different with Algorithm~\ref{algorithm:baseline}, Algorithm~\ref{algorithm:icrRR} performs Step-1 by first computing the two sets $\aset{i}$ and $\dset{i}$ in lines 1-13, then computing $\phoplabel{i}$ in lines 21-24. The overall cost is same as that of Algorithm~\ref{algorithm:baseline}, i.e., $O(k^2(|V|+|E|))$.

The difference between Algorithm~\ref{algorithm:baseline} and Algorithm~\ref{algorithm:icrRR} lies in Step-2 (lines 14-20), i.e., how to compute the increased number of reachable queries that cannot be covered by $\phoplabel{i-1}$ but can be covered by $\phoplabel{i}$ based on Equation~\ref{eq:singlenode}. The cost of Step-2 for each \hnode is $O(i|\aset{i}||\dset{i}|)$. For $k$ \hnode, the cost is therefore $O(\sum_{i\in [1,k]}i|\aset{i}||\dset{i}|)$.

Therefore, the time complexity of Algorithm~\ref{algorithm:icrRR} is $O(k^2(|V|+|E|)+\sum_{i\in [1,k]}i|\aset{i}||\dset{i}|)$.

Similar to Algorithm~\ref{algorithm:baseline}, we need to maintain the 2-hop labels w.r.t. at most $k$ \hnodes during the processing. As
$\aset{i}$ and
$\dset{i}$ are bounded by $V$, and $\aset{i}(\dset{i})$ can be used to store nodes of $\aset{i+1}(\dset{i+1})$,
the space complexity of Algorithm~\ref{algorithm:icrRR} is $O(k|V|)$.

\vspace{3mm}

\begin{example} \label{example:incremental1} Consider $G$ in Figure~\ref{graph:graph}. Assume that we want to construct \phop $\phoplabel{3}$.

The first node to be processed is $v_1$, and
the \phop are shown in Table~\ref{table:graph}. As $\aset{1}=\{v_1, v_4, v_6, v_{11}\}$, $\dset{1}=\{v_1,v_2,v_7, v_9, v_{10},v_{13}, v_{15}\}$, thus we know that $\phopNum{1}=\phopNk{1}=|\aset{1}|\times|\dset{1}|-1=27$.
The second processed node is $v_2$ and the
\phop are shown in Table~\ref{table:graph}. By lines 5-13, we have that $\aset{2}=\{v_2, v_3, v_5, v_{12}\}$, $\dset{2}=\{v_2, v_{10}, v_{13}, v_{14}\}$. Then, in lines 15-17, we need to test $|\aset{2}|\times|\dset{2}|=16$ reachability queries. The result is $\lambda=0$, thus $\phopNk{2}=|\aset{2}|\times|\dset{2}|-1-0=15$, and $\phopNum{2}=\phopNk{1}+\phopNk{2}=27+15=42$.
The third processed node is $v_3$. By lines 5-13, we have that $\aset{3}=\{v_3, v_4, v_5, v_6, v_{11}\}$, $\dset{3}=\{v_3,v_7, v_8, v_{9}, v_{14}\}$. Then, in lines 15-17, we need to test $|\aset{3}|\times|\dset{3}|=25$ reachability queries. The result is $\lambda=6$, thus $\phopNk{3}=|\aset{3}|\times|\dset{3}|-1-6=18$, and $\phopNum{3}=\phopNum{2}+\phopNk{3}=42+18=60$.
After processing $v_1, v_2$ and $v_3$, we have $\phoplabel{3}$ shown in Table~\ref{table:graph}, and the \rr is $60/70=85.7$\% by testing $16+25=41$ reachability queries for Algorithm~\ref{algorithm:icrRR}.

As a comparison, when using Algorithm~\ref{algorithm:baseline}, $|\aset{3}|=8$, $|\dset{3}|=10$, and we need to test 80 reachability queries to get the \rr.
\hfill ~ $\Box$
\end{example}

\vspace{3mm}

Note that, since a reachable query $\querytrue{a}{d}$ may be covered by 2-hop labels w.r.t. different hop-nodes, compared with Algorithm~\ref{algorithm:baseline}, Algorithm~\ref{algorithm:icrRR} may test the reachability relationship between $a$ and $d$ in line 16 more than once. However, it is still valuable due to that (1) only a part of reachable queries, rather than all, need to be tested more than once, and (2) we can terminate the computation whenever we find that \rr of $\Sk{i}$ is good enough in line 20. As a comparison, Algorithm~\ref{algorithm:baseline} may be called more than once before getting a \rr meeting the requirement. When it is called again due to the enlarged \hnode set, all previously tested queries will be tested once more.

\section{The Incremental-Partition Approach}
\label{section:optimization}

\subsection{The Equivalence Relationship}

By comparing Algorithm~\ref{algorithm:baseline} and Algorithm~\ref{algorithm:icrRR}, we know that the key factor that affects the overall performance is the total number of tested reachability queries, which dominates the cost of Step-2, as indicated by their time complexities. Even though Algorithm~\ref{algorithm:icrRR} does not need to compute \rr from scratch when the \hnode set becomes large by adding one more \hnode $v_i$, it still needs to test $|\aset{i}|\times |\dset{i}|$ reachability queries in line 16 with cost $O(i|\aset{i}||\dset{i}|)$. Given a large \hnode set, the cost could be unaffordable.

\vspace{3mm}

\begin{definition}\label{def:equivalent} \textbf{[Equivalence Relationship]} Given a \hnode $v_i$, its ancestor set $\aset{i}$ and descendant set $\dset{i}$. We say two nodes $a_1, a_2 (a_1\neq a_2)$ of $\aset{i}$ are forward equivalent to each other, denoted as $a_1 \equiv_F a_2$, if they have the same out-label, i.e., $\lbOut{a_1}{i} = \lbOut{a_2}{i}$. Similarly, we say two nodes $d_1, d_2 (d_1\neq d_2)$ of $\dset{i}$ are backward equivalent to each other, denoted as $d_1 \equiv_B d_2$, if they have the same in-label, i.e., $\lbIn{d_1}{i} = \lbIn{d_2}{i}$.
\end{definition}

\vspace{3mm}

By Definition~\ref{def:equivalent}, we can get, for $\aset{i}$, a partition $\parta{i}=\{\aset{i1}, \aset{i2}, ..., \aset{im}\}$, which consists of a set of $m$ disjoint subsets satisfying that (1) $\forall l,j\in [1,m], l\neq j, \aset{il} \cap \aset{ij} = \emptyset$ and $\cup_{l\in [1,m]}\aset{il} = \aset{i}$; and (2) $\forall a_{l}, a_{j}$ belonging to the same subset, $a_{l}\equiv_F a_{j}$.

\vspace{3mm}

\begin{theorem} \label{theorem:fwdequivalent} Let $a_1$ and $a_2$ be two nodes satisfying that $a_1$ and $a_2$ are forward equivalent to each other $(a_1\equiv_F a_2)$. For $\forall d\in V$, we have that $\lbOut{a_1}{i}\cap \lbIn{d}{i} = \lbOut{a_2}{i}\cap \lbIn{d}{i}$.
\end{theorem}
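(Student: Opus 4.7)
The plan is to derive the claim directly from the definition of forward equivalence, since the statement is essentially a one-step consequence of that definition.

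First I would unfold Definition~\ref{def:equivalent}. By hypothesis, $a_1 \equiv_F a_2$, which by definition means $\lbOut{a_1}{i} = \lbOut{a_2}{i}$ as sets of hop-nodes drawn from $\Sk{i}$. This is the only fact about $a_1$ and $a_2$ that will be needed; in particular, no reachability argument or appeal to the structure of $G$ is necessary.

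Next, I would invoke a basic set-theoretic principle: if $X = Y$, then $X \cap Z = Y \cap Z$ for every set $Z$. Applying this with $X = \lbOut{a_1}{i}$, $Y = \lbOut{a_2}{i}$, and $Z = \lbIn{d}{i}$ for an arbitrary $d \in V$ immediately yields $\lbOut{a_1}{i} \cap \lbIn{d}{i} = \lbOut{a_2}{i} \cap \lbIn{d}{i}$, which is exactly the conclusion. Since $d$ was arbitrary, the equality holds for all $d \in V$.

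There is essentially no obstacle here; the main content of the theorem is conceptual rather than technical. The theorem is really a bridge lemma that justifies the forthcoming optimization: it says that once we group nodes in $\aset{i}$ by their 2-hop out-label, all members of the same equivalence class give the same answer to the test $\lbOut{a}{i} \cap \lbIn{d}{i} \neq \emptyset$ used in Equation~\ref{eq:singlenode-lambda} of Algorithm~\ref{algorithm:icrRR}. Consequently, for every partition block $\aset{il} \subseteq \aset{i}$ and every $d$, the algorithm only needs to perform one intersection test rather than $|\aset{il}|$ tests, which is the quantitative gain the subsequent sections will exploit.
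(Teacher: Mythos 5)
Your proof is correct and follows the same route as the paper's: both simply unfold Definition~\ref{def:equivalent} to get $\lbOut{a_1}{i}=\lbOut{a_2}{i}$ and observe that equal sets have equal intersections with any $\lbIn{d}{i}$. Your version merely spells out the set-theoretic step that the paper dismisses as obvious, so there is nothing to add.
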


\begin{proof} The correctness is obvious, due to that $a_1$ and $a_2$ are forward equivalent to each other, which means that they have the same out-label. \hfill $\Box$
\end{proof}

\vspace{3mm}

Based on this result, for each subset $\aset{ij}\in \parta{i}$, to know the reachability relationship from all nodes of $\aset{ij}$ to $\forall d\in V$, we do not need to test $|\aset{ij}|$ reachability queries, instead, we only need to test one reachability query, due to that all nodes of $\aset{ij}$ are forward equivalent to each other.

For $\dset{i}$, we also have a partition $\partd{i}=\{\dset{i1}, \dset{i2}, ..., \dset{in}\}$ satisfying that (1) $\forall l,j\in [1,n], l\neq j, \dset{il} \cap \dset{ij} = \emptyset$ and $\cup_{j\in [1,n]}\dset{ij} = \dset{i}$;
and (2) $\forall d_{l}, d_{j}$ belonging to the same subset, $d_{l}\equiv_B d_{j}$. And similarly, for each subset $\dset{ij}\in \partd{i}$, to know the reachability relationship from any node to all nodes of $\dset{ij}$, we do not need to test $|\dset{ij}|$ reachability queries, instead, the number of tested reachability queries can be reduced to one, due to that all nodes of $\dset{ij}$ are backward equivalent to each other.

\vspace{3mm}

\begin{theorem}\label{theorem:numberofequivalent}
Given a \hnode $v_i$, its ancestor set $\aset{i}$ and descendant set $\dset{i}$, the number of tested reachability queries for \rr computation is $|\parta{i}|\times|\partd{i}|$, which is bounded by $\min\{|\aset{i}|\times |\dset{i}|, 4^{i-1}\}$.

\end{theorem}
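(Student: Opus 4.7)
The plan is to establish the equality $|\parta{i}|\times|\partd{i}|$ first, then derive each of the two upper bounds separately.

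For the equality, I would invoke Theorem~\ref{theorem:fwdequivalent} together with its backward counterpart. Concretely, fix any block $\aset{ij}\in\parta{i}$ and any block $\dset{il}\in\partd{i}$. All nodes in $\aset{ij}$ share a common out-label, so for any $d\in \dset{il}$ the test $\lbOut{a}{i-1}\cap\lbIn{d}{i-1}\neq\emptyset$ returns the same answer for every $a\in\aset{ij}$; symmetrically, for any fixed $a\in\aset{ij}$ the answer is the same for every $d\in\dset{il}$. Thus a single reachability test per pair of blocks $(\aset{ij},\dset{il})$ suffices to determine the contribution of all $|\aset{ij}|\cdot|\dset{il}|$ pairs, and the total number of tests needed in line~16 of Algorithm~\ref{algorithm:icrRR} drops to exactly $|\parta{i}|\times|\partd{i}|$.

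For the first upper bound, I would observe that each block in a partition is nonempty, so $|\parta{i}|\leq|\aset{i}|$ and $|\partd{i}|\leq|\dset{i}|$, yielding $|\parta{i}|\times|\partd{i}|\leq|\aset{i}|\times|\dset{i}|$ (which is the cost of the naive testing in Algorithm~\ref{algorithm:icrRR}).

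For the second upper bound, the key observation is that when $v_i$ is being processed, every out-label $\lbOut{a}{i-1}$ with $a\in\aset{i}$ is a subset of $\Sk{i-1}=\{v_1,\dots,v_{i-1}\}$, and similarly every in-label $\lbIn{d}{i-1}$ with $d\in\dset{i}$ is a subset of $\Sk{i-1}$. Since $|\Sk{i-1}|=i-1$, there are at most $2^{i-1}$ distinct subsets, hence at most $2^{i-1}$ distinct out-labels and at most $2^{i-1}$ distinct in-labels. Because $\parta{i}$ groups nodes of $\aset{i}$ by their out-label and $\partd{i}$ groups nodes of $\dset{i}$ by their in-label, we get $|\parta{i}|\leq 2^{i-1}$ and $|\partd{i}|\leq 2^{i-1}$, so $|\parta{i}|\times|\partd{i}|\leq 4^{i-1}$. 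Combining the two bounds gives the claimed $\min\{|\aset{i}|\times|\dset{i}|,\,4^{i-1}\}$. I do not expect serious obstacles; the only subtle point is checking that Definition~\ref{def:equivalent} (stated at level $i$) coincides with the partition induced at level $i-1$, which follows because processing $v_i$ appends the same element $v_i$ to every out-label of $\aset{i}$ and every in-label of $\dset{i}$, so the equivalence classes are invariant between levels $i-1$ and $i$.
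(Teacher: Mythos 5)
Your proof is correct, and its overall skeleton --- one reachability test per pair of blocks gives the equality $|\parta{i}|\times|\partd{i}|$, and then the two upper bounds are combined by a $\min$ --- matches the paper's. The equality (via Theorem~\ref{theorem:fwdequivalent} and its backward analogue) and the bound $|\parta{i}|\times|\partd{i}|\leq|\aset{i}|\times|\dset{i}|$ (blocks of a partition are nonempty) are argued essentially identically in both. Where you genuinely diverge is the $4^{i-1}$ bound: the paper derives it by an inductive doubling argument on the global partitions $\partVA{i}$ and $\partVD{i}$ of $V$, observing that each class of $\partVA{i-1}$ splits into at most two classes when $v_i$ is processed (nodes that reach $v_i$ versus nodes that do not), so that $|\partVA{i}|\leq 2^{i}$ and hence $|\parta{i}|\leq 2^{i-1}$; you instead count directly, noting that every out-label of a node in $\aset{i}$ is a subset of $\Sk{i-1}$, of which there are only $2^{i-1}$, so $|\parta{i}|\leq 2^{i-1}$ immediately (and symmetrically for $\partd{i}$). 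Your counting argument is shorter and more self-contained; the paper's doubling argument buys, as a side effect, the fact that $\partVA{i}$ refines $\partVA{i-1}$, which is exactly the structure exploited afterwards in Theorem~\ref{theorem:partition} and in the set-ID/hash-table implementation of Algorithm~\ref{algorithm:icrRRplus}. You also correctly identify and resolve the one subtlety --- Definition~\ref{def:equivalent} is stated with level-$i$ labels while the tests in Equation~\ref{eq:singlenode-partition} use level-$(i-1)$ labels --- by noting that appending $v_i$ uniformly to all out-labels of $\aset{i}$ (resp.\ in-labels of $\dset{i}$) leaves the equivalence classes unchanged; the paper defers this point to Theorem~\ref{theorem:partition}.
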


\begin{proof} Let $\parta{i}(\partd{i})$ be the partition of $\aset{i}(\dset{i})$ based on the equivalence relationship, $\partVA{i}(\partVD{i})$ the partition of $V$ w.r.t. \hnode set $\Sk{i}$ and forward (backward) equivalence relationship, i.e., all nodes in each subset have the same out-label (in-label), which is a subset of $\Sk{i}$. Initially, $\parta{1}=\{\aset{1}\}(\partd{1}=\{\dset{1}\})$, $\partVA{1}=\{\aset{1}, V\setminus \aset{1}\}(\partVD{1}=\{\dset{1}, V\setminus \dset{1}\})$.

On one hand, according to Theorem~\ref{theorem:fwdequivalent}, for each subset $\aset{il}\in\parta{i}$, the result of testing all the reachability relationships from nodes of $\aset{il}$ to any other node is same to each other, thus we only need to randomly pick a node and take it as the representative node of $\aset{il}$ to perform the testing of reachability relationship.
Similarly, for each subset $\dset{ij}\in\partd{i}$ based on backward equivalence relationship, we can also randomly pick a node and take it as the representative node of $\dset{ij}$ to test the reachability relationships from any node to all nodes of $\dset{ij}$.
As a result, the number of tested reachability queries from nodes of $\aset{i}$ to nodes of $\dset{i}$ is $|\parta{i}|\times|\partd{i}|$. Since $\parta{i}(\partd{i})$ is the partition of $\aset{i}(\dset{i})$, we know that $|\parta{i}|\times|\partd{i}|\leq |\aset{i}|\times|\dset{i}|$.

On the other hand, given the partition $\partVA{i-1}(\partVD{i-1})$ of $V$, the size of $\partVA{i}(\partVD{i})$ is at most twice bigger than that of $\partVA{i-1}(\partVD{i-1})$. The reason lies in that all nodes in each subset of $\partVA{i-1}(\partVD{i-1})$ can be further divided into at most two disjoint subsets. One consists of nodes that can reach (be reached by) $v_i$, and the other contains nodes that cannot reach (be reached by) $v_i$. Then, the size of $\partVA{i}(\partVD{i})$ is bounded by $2^{i}$, and the size of $\parta{i}(\partd{i})$ is bounded by $2^{i-1}$, thus the the number of tested reachability queries is bounded by $2^{i-1}\times 2^{i-1}=4^{i-1}$.

In summary, we know that the number of tested reachability queries for \rr computation w.r.t. \hnode $v_i$ is bounded by $\min\{|\aset{i}|\times |\dset{i}|, 4^{i-1}\}$.
\hfill $\Box$

\end{proof}

\vspace{3mm}

According to Theorem~\ref{theorem:numberofequivalent}, we can reduce the number of tested reachability queries when processing \hnode $v_i$.

As shown by Equation~\ref{eq:singlenode-partition}, for each pair of subsets $(\aset{il}\in \parta{i}, \dset{ij}\in \partd{i})$, we only need to check the reachability relationship between their representative nodes $a\in \aset{il}$ and $d\in \dset{ij}$. If $a$ can reach $d$ can be answered by $\phoplabel{i-1}$, it means that all the reachable relationships from each node of $\aset{il}$ to every node of $\dset{ij}$ can be answered by $\phoplabel{i-1}$. To do that, the first thing we need to do is getting the partitions of both $\aset{i}$ and $\dset{i}$ according to equivalence relationship.

\begin{equation}\label{eq:singlenode-partition}
  \lambda = \sum_{
  \substack{a\in \aset{il}\in \parta{i}\\
            d\in \dset{ij}\in \partd{i}\\
            \lbOut{a}{i-1}\cap \lbIn{d}{i-1}\neq \emptyset}}
  |\aset{il}|\times |\dset{ij}|
\end{equation}

\subsection{Partitions Computation}

To get the partitions of both $\aset{i}$ and $\dset{i}$, we need to compare the labels of nodes in $\aset{i}$ and $\dset{i}$, such that nodes with same labels can be clustered together. The naive way to do this is based on pairwise comparing node labels, which is expensive in practice.
It is worth noting that the out-label and in-label of a node are sorted in advance, this actually can be done without additional cost, due to that these labels are used to check whether their set-intersection is empty, which means that we can store the processing order of \hnodes, rather than their IDs, in these labels. In this way, the integers in both out-label and in-label of any node are naturally sorted, as shown by Table~\ref{table:graph}.
With this result, we can sort all nodes in $\aset{i}(\dset{i})$ by comparing their out-labels (in-labels) in lexicographic order. After the sorting operation, all equivalent nodes are clustered together. As the size of each label is bounded by $i$, the cost of computing the partition $\parta{i}(\partd{i})$ of $\aset{i}(\dset{i})$ is $O(i\times|\aset{i}|\times \log |\aset{i}|) (O(i\times|\dset{i}|\times \log |\dset{i}|))$.

Let $\partVA{i}$ be the partition of $V$ w.r.t. \hnode set $\Sk{i}$ and forward equivalence relationship, i.e., all nodes in each subset have the same out-label, which is a subset of $\Sk{i}$, $\partVD{i}$ the partition of $V$ w.r.t. $\Sk{i}$ and backward equivalence relationship, i.e., all nodes in each subset have the same in-label, which is also a subset of $\Sk{i}$. We have the following result.

\vspace{3mm}

\begin{theorem} \label{theorem:partition} Given the \hnode $v_i$ and its ancestor (descendant) set $\aset{i}(\dset{i})$, for $\forall v_1, v_2\in \aset{i}(\dset{i})$, $v_1 \equiv_F v_2 (v_1 \equiv_B v_2)$, iff they belong to the same subset of $\partVA{i-1}(\partVD{i-1})$.
\end{theorem}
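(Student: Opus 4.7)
The plan is to prove both directions of the iff statement by unfolding the definition of the labels $\lbOut{}{i}$ (and symmetrically $\lbIn{}{i}$) in terms of $\lbOut{}{i-1}$, using the incremental construction already described in Section~\ref{section:incremental}. The key observation driving the argument is that, by construction, processing $v_i$ modifies the out-label of a node $v$ if and only if $v\in \aset{i}$, in which case it sets $\lbOut{v}{i}=\lbOut{v}{i-1}\cup\{v_i\}$; otherwise $\lbOut{v}{i}=\lbOut{v}{i-1}$. Moreover, since $v_i$ is not processed as a hop-node until iteration $i$, we have $v_i\notin \lbOut{v}{i-1}$ for every node $v$, so the added element $v_i$ is genuinely new.

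First, for the forward direction, I would take two distinct nodes $v_1,v_2\in \aset{i}$ with $v_1\equiv_F v_2$, i.e.\ $\lbOut{v_1}{i}=\lbOut{v_2}{i}$. By the construction above, $\lbOut{v_1}{i}=\lbOut{v_1}{i-1}\cup\{v_i\}$ and $\lbOut{v_2}{i}=\lbOut{v_2}{i-1}\cup\{v_i\}$, and since $v_i\notin \lbOut{v_1}{i-1}\cup \lbOut{v_2}{i-1}$, we can simply strip off the common element to conclude $\lbOut{v_1}{i-1}=\lbOut{v_2}{i-1}$, placing $v_1$ and $v_2$ in the same subset of $\partVA{i-1}$.

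For the converse direction, I would start with $v_1,v_2\in \aset{i}$ lying in the same subset of $\partVA{i-1}$, meaning $\lbOut{v_1}{i-1}=\lbOut{v_2}{i-1}$. Since both are in $\aset{i}$, the same element $v_i$ is appended to both out-labels during the processing of $v_i$, giving $\lbOut{v_1}{i}=\lbOut{v_1}{i-1}\cup\{v_i\}=\lbOut{v_2}{i-1}\cup\{v_i\}=\lbOut{v_2}{i}$, so $v_1\equiv_F v_2$. The backward-equivalence case for $\dset{i}$ and $\partVD{i-1}$ is completely symmetric: replace $\lbOut{}{}$ by $\lbIn{}{}$ and $\aset{i}$ by $\dset{i}$, and invoke the fact that the incremental step appends $v_i$ to $\lbIn{d}{i-1}$ exactly for $d\in \dset{i}$.

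The proof is essentially a direct label-unfolding, so I do not expect a serious obstacle; the only subtlety worth being explicit about is that $v_i$ is strictly new to every out-label at step $i$ (so the set union in $\lbOut{v}{i-1}\cup\{v_i\}$ is disjoint), which is what allows the ``strip off $v_i$'' step to be a lossless equivalence. This observation follows from the ranking/selection procedure ensuring that hop-nodes are processed in a fixed order, so that at step $i$ no out-label constructed in steps $1,\dots,i-1$ could possibly contain $v_i$. Once this is noted, both implications are immediate.
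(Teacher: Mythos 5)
Your proof is correct and follows essentially the same route as the paper's: both directions are handled by unfolding $\lbOut{v}{i}$ as $\lbOut{v}{i-1}\cup\{v_i\}$ for $v\in \aset{i}$ and stripping off (or appending) the common element $v_i$, with the symmetric argument for $\dset{i}$ and in-labels. Your explicit remark that $v_i$ is absent from every label at step $i-1$ (so the stripping step is lossless) is a small point the paper leaves implicit, but it does not change the argument.
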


\begin{proof} We prove this result from two aspects. First, we prove the correctness when both $v_1$ and $v_2$ belong to $\aset{i}$ (Case-1), then we prove the correctness when both $v_1$ and $v_2$ belong to $\dset{i}$ (Case-2).

\stitle{Case-1} where $v_1, v_2\in \aset{i}$ and $v_i\in \lbOut{v_1}{i}\cap \lbOut{v_2}{i}$.

On one hand, if $v_1 \equiv_F v_2$, it means that $\lbOut{v_1}{i}=\lbOut{v_2}{i}$ according to Definition~\ref{def:equivalent}. Hence, $\lbOut{v_1}{i}\setminus \{v_i\} = \lbOut{v_2}{i}\setminus \{v_i\}$, i.e., they belong to the same subset of $\partVA{i-1}$.

On the other hand, if both $v_1$ and $v_2$ belong to the same subset of $\partVA{i-1}$, it means that before processing \hnode $v_i$, $\lbOut{v_1}{i-1}=\lbOut{v_2}{i-1}$ according to the definition of $\partVA{i-1}$. As $v_1, v_2\in \aset{i}$, we know that after processing $v_i$, $v_i\in \lbOut{v_1}{i}\cap \lbOut{v_2}{i}$ and $\lbOut{v_1}{i} = \lbOut{v_2}{i}$ still holds. According to Definition~\ref{def:equivalent}, $v_1 \equiv_F v_2$.

Therefore we have that $v_1 \equiv_F v_2$, iff they belong to the same subset of $\partVA{i-1}$.

\stitle{Case-2} where $v_1, v_2\in \dset{i}$ and $v_i\in \lbIn{v_1}{i}\cap \lbIn{v_2}{i}$.

Similar to the proof of Case-1, we know that $v_1 \equiv_B v_2$, iff they belong to the same subset of $\partVD{i-1}$.

By considering both the two cases, we know that for $\forall v_1, v_2\in \aset{i}(\dset{i})$, $v_1 \equiv_F v_2 (v_1 \equiv_B v_2)$, iff they belong to the same subset of $\partVA{i-1}(\partVD{i-1})$.
\hfill $\Box$
\end{proof}

\vspace{3mm}

According to Theorem~\ref{theorem:partition}, we assign each node $v$ two set IDs, denoted as $\sid{v}$ and $\sidd{v}$, which are used to check which subset it belongs to in $\partVA{i}$ and $\partVD{i}$, respectively.
Then, given the ancestor (descendant) set $\aset{i}(\dset{i})$ of $v_i$, we only need to scan all nodes of $\aset{i}(\dset{i})$ once, and know immediately that for two nodes $v_1$ and $v_2$, if $\sid{v_1}=\sid{v_2}(\sidd{v_1}=\sidd{v_2})$ in $\partVA{i-1}(\partVD{i-1})$, then $v_1\equiv_F v_2 (v_1\equiv_B v_2)$ and will definitely belong to the same subset of $\partVA{i}(\partVD{i})$. Therefore, $\partVA{i}(\partVD{i})$ is a refinement of $\partVA{i-1}(\partVD{i-1})$, i.e., each element of $\partVA{i}(\partVD{i})$ is a subset of a unique element of $\partVA{i-1}(\partVD{i-1})$.

Recall that when processing the \hnode $v_i$, we first have its ancestor (descendant) set $\aset{i}(\dset{i})$, then get the partition $\parta{i}(\partd{i})$ of $\aset{i}(\dset{i})$ based on equivalence relationship.
Since $\partVA{i}(\partVD{i})$ is the partition of $V$ w.r.t. equivalence relationship, we know that $\parta{i}\subset \partVA{i}(\partd{i}\subset \partVD{i})$, and the relationship between $\partVA{i}(\partVD{i})$, $\partVA{i-1}(\partVD{i-1})$ and $\parta{i}(\partd{i})$ are shown as Equations~\ref{global-partition-relationship-A}-\ref{global-partition-relationship-D}.

\begin{equation}\label{global-partition-relationship-A}
  \partVA{i} = \{P\setminus \aset{i}| P \in \partVA{i-1}\} \cup \parta{i}
\end{equation}

\begin{equation}\label{global-partition-relationship-D}
  \partVD{i} = \{P\setminus \dset{i}| P \in \partVD{i-1}\} \cup \partd{i}
\end{equation}

When processing \hnode $v_i$, since we only need to check the reachability relationships from nodes of $\aset{i}$ to $\dset{i}$, we choose to maintain the information of both $\partVA{i}(\partVD{i})$ and $\parta{i}(\partd{i})$ using the set ID of each node to facilitate partitions computation.
Specifically, we use a hash table $\ahash{}(\dhash{})$ to help achieve linear-time complexity. Each element of $\ahash{}(\dhash{})$ is a tuple $(id_o, e_n)$ denoting a subset $\aset{il}(\dset{il})$ of $\parta{i}(\partd{i})$, where $id_o$ is, for all nodes of $\aset{il}(\dset{il})$, their old set ID in $\partVA{i-1}(\partVD{i-1})$, $e_n=(id_n, v_1, s)$ is a triple denoting the new set ID for all nodes of $\aset{il}(\dset{il})$, the representative node of $\aset{il}(\dset{il})$, and the size of $\aset{il}(\dset{il})$, respectively.

\vspace{3mm}

\begin{example} \label{example:incremental2} Consider $G$ in Figure~\ref{graph:graph}. Before processing $v_1$, $\partVA{0}=\partVD{0}=\{V\}$, $\parta{0}=\partd{0}=\emptyset$, and for all nodes $v$, $\sid{v}=\sidd{v}=0$.

For the first node $v_1$, $\aset{1}=\{v_1, v_4, v_6, v_{11}\}$, $\dset{1}=\{v_1,v_2,v_7, v_9, v_{10},v_{13}, v_{15}\}$. Since all nodes in $\aset{1} (\dset{1})$ have the same $\sid{v}(\sidd{v})$, we know that
$\parta{1}=\{\aset{1}\}$ and $\partd{1}=\{\dset{1}\}$.
$\partVA{1}=\{\aset{1}, V\setminus \aset{1}\}$ and $\partVD{1}=\{\dset{1}, V\setminus \dset{1}\}$. In Table~\ref{table:sid}, the two columns under $v_1$ denote $\partVA{1}$ and $\partVD{1}$, where each 1 in the second (third) column corresponds a node in $\aset{1} (\dset{1})$. Figure~\ref{graph:hash}(a) shows the two hash tables denoting $\parta{1}$ and $\partd{1}$, respectively. For $\ahash{}$, there is one (key, value) pair, denoting that $\parta{1}$ contains one subset $\aset{1}$, and for all nodes in $\aset{1}$, their set ID is 0 in $\partVA{0}$, thus they all belong to the same subset in $\parta{1}$, i.e.,  $\parta{1}=\{\aset{1}\}$. By $\ahash{}$ in Figure~\ref{graph:hash}(a), we know that all nodes in $\aset{1}$ now have the new set ID 1, the representative node of $\aset{1}$ is $v_4$, and $|\aset{1}|=4$.

For the second processed node $v_2$, $\aset{2}=\{v_2, v_3, v_5, v_{12}\}$, $\dset{2}=\{v_2, v_{10}, v_{13}, v_{14}\}$. As all nodes in $\aset{2}$ have the same set ID 0 in $\partVA{1}$, $\parta{2}$ contains a unique subset $\aset{2}$, i.e., $\parta{2}=\{\aset{2}\}$. As shown by Figure~\ref{graph:hash}(b), the key is 0, and the triple $(2, v_3,4)$ denotes that the new set ID for all nodes in $\aset{2}$ is 2, the representative node of $\aset{2}$ is $v_3$, and $|\aset{2}|=4$. Similarly, all nodes in $\dset{2}$ have the same set ID 1 in $\partVD{1}$, $\partd{2}$ contains a unique subset $\dset{2}$, i.e., $\partd{2}=\{\dset{2}\}$, which is denoted as $\dhash{}$ in Figure~\ref{graph:hash}(b).

For the third processed node $v_3$, $\aset{3}=\{v_3, v_4, v_5, v_6, v_{11}\}$, $\dset{3}=\{v_3,v_7, v_8, v_{9}, v_{14}\}$. For $\aset{3}$, $v_3$ and $v_5$ have the same set ID 2 in $\partVA{2}$, thus they form the subset in $\parta{3}$. Further, $v_4, v_6, v_{11}$ have the same set ID 1 in $\partVA{2}$, they form the second subset in $\parta{3}$. Therefore $\parta{3}=\{\{v_3, v_5\},\{v_4, v_6, v_{11}\}\}$. Similarly, we know that $\partd{3}=\{\{v_3,v_8, v_{14}\},\{v_7, v_{9}\}\}$. Both $\parta{3}$ and $\partd{3}$ are denoted by $\ahash{}$ and $\dhash{}$ in Figure~\ref{graph:hash}(c), respectively.
\hfill ~ $\Box$
\end{example}

\vspace{3mm}

\begin{table}[thp]
\caption{The status of set IDs for all nodes.}
\label{table:sid}

\centering


\begin{tabular} {|c||@{}r|@{}r||@{}r|@{}r||@{}r|@{}r|} \hline

 \raisebox{-2mm}[-10mm][1mm]{~Node~} & \multicolumn{2}{c||}{$v_1$} & \multicolumn{2}{c||}{$v_{2}$}&
 \multicolumn{2}{c|}{$v_{3}$}\\  \cline{2-7}

        & ~$\sid{v}$ & ~$\sidd{v}$   &~$\sid{v}$ & ~$\sidd{v}$ & ~$\sid{v}$ & ~$\sidd{v}$   \\

\hline\hline
$v_{1}$	& 	1	& 	1	& 	1	& 	1	& 	1	& 	1	\\	\hline
$v_{2}$	& 		& 	1	& 	2	& 	2	& 	2	& 	2	\\	\hline
$v_{3}$	& 		& 		& 	2	& 		& 	3	& 	3	\\	\hline
$v_{4}$	& 	1	& 		& 	1	& 		& 	4	& 		\\	\hline
$v_{5}$	& 		& 		& 	2	& 		& 	3	& 		\\	\hline
$v_{6}$	& 	1	& 		& 	1	& 		& 	4	& 		\\	\hline
$v_{7}$	& 		& 	1	& 		& 	1	& 		& 	4	\\	\hline
$v_{8}$	& 		& 		& 		& 		& 		& 	3	\\	\hline
$v_{9}$	& 		& 	1	& 		& 	1	& 		& 	4	\\	\hline
$v_{10}$	& 		& 	1	& 		& 	2	& 		& 	2	\\	\hline
$v_{11}$	& 	1	& 		& 	1	& 		& 	4	& 		\\	\hline
$v_{12}$	& 		& 		& 	2	& 		& 	2	& 		\\	\hline
$v_{13}$	& 		& 	1	& 		& 	2	& 		& 	2	\\	\hline
$v_{14}$	& 		& 		& 		& 		& 		& 	3	\\	\hline
$v_{15}$	& 		& 	1	& 		& 	2	& 		& 	2	\\	\hline

\end{tabular}
\end{table}

\begin{figure}[t]
  \centering
\includegraphics{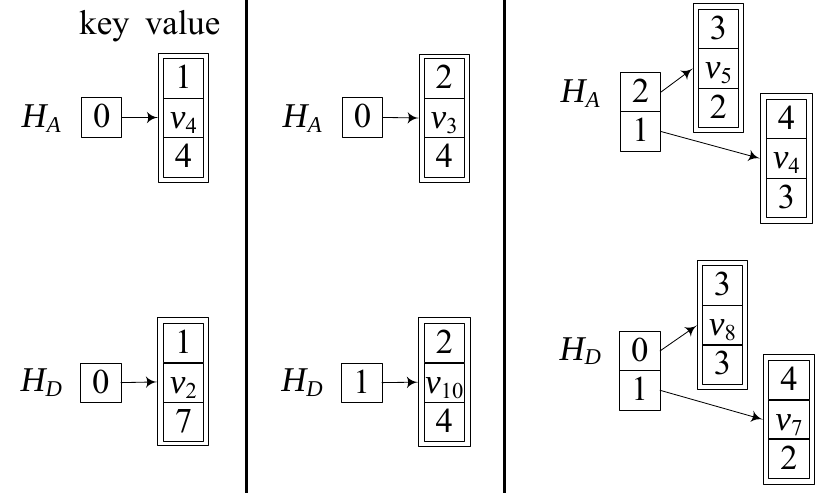}

\flushleft
\quad(a) $\Sk{1}=\{v_1\}$~\quad (b) $\Sk{2}=\{v_1,v_2\}$
~~~~\quad (c) $\Sk{3}=\{v_1,v_2£¬v_3\}$

\caption{Running status of the two hash tables $\ahash{}$ and $\dhash{}$.}
\label{graph:hash}
\end{figure}

\stitle{The Algorithm:} As shown by Algorithm~\ref{algorithm:icrRRplus}, for each \hnode $v_i$, we first perform forward and backward \bfs to get $\dset{i}$ (lines 6-15) and $\aset{i}$ (lines 16-25). At the same time, we generate their partitions $\partd{i}$ and $\parta{i}$, for which each subset is recorded in $\dhash{}$ and $\ahash{}$, respectively. In lines 26-29, we compute $\lambda$ according to Equation~\ref{eq:singlenode-partition}, which is the number of reachable queries that are covered by $\phoplabel{i-1}$. In line 30, we get the number of reachable queries that can be covered by $\phoplabel{i}$ but cannot be covered by $\phoplabel{i-1}$. After that, we have the total number of covered reachable queries in line 31, and the \rr in line 32. At last, we generate $\phoplabel{i}$ in lines 33-36, and return the \rr of $\Sk{k}$ in line 37.

\begin{algorithm}[t]
 \begin{flushleft}
 \end{flushleft}
\caption{incRR$^+(G=(V,E), k, \tcv{G})$} \label{algorithm:icrRRplus}

\vspace{-3mm}

~1 \ \ $\phopNum{0}\leftarrow 0$; $n_A\leftarrow 0$; $n_D\leftarrow 0$;

~2 \ \ rank nodes in $G$ in a certain order

~3 \ \ put the first $k$ nodes into $\Sk{k}$ as \hnodes

~4 \ \ \textbf{foreach} $(v_i\in \Sk{k})$ \textbf{do}

~5 \ \ \quad \ $\aset{i}\leftarrow \emptyset$; $\dset{i}\leftarrow \emptyset$; $\ahash{}\leftarrow \emptyset$; $\dhash{}\leftarrow \emptyset$

~6 \ \ \quad \ perform forward \bfs from $v_i$, and for each visited $v$

~7 \ \ \ \quad\quad \textbf{if} $(\lbOut{v_i}{i-1}\bigcap \lbIn{v}{i-1} = \emptyset)$ \textbf{then} \hfill /*$\phoplabel{i-1}$*/

~8 \ \ \ \  \quad\quad\quad \textbf{if} $(\sid{v}\not\in \dhash{})$ \textbf{then} 

~9 \ \ \ \ \ \quad\quad\quad\quad $n_{A}\leftarrow n_{A}+1$

10 \ \ \quad\quad\quad\quad $\dhash{}[\sid{v}]\leftarrow (n_{A}, v, 1)$

11 \   \quad\quad\quad \textbf{else}

12 \ \ \quad\quad\quad\quad $\dhash{}[\sid{v}].s \leftarrow \dhash{}[\sid{v}].s+1$

13 \  \quad\quad\quad $\dset{i} \leftarrow \dset{i} \cup \{v\}$

14 \  \quad\quad\quad $\sid{v} \leftarrow n_{A}$

15 \  \quad\quad    \textbf{else} stop expansion from $v$

16 \ \quad perform backward \bfs from $v_i$, and for each visited $v$

17 \ \quad \quad  \textbf{if} $(\lbOut{v}{i-1}\bigcap \lbIn{v_i}{i-1} = \emptyset)$ \textbf{then} \hfill /*$\phoplabel{i-1}$*/

18 \ \ \ \ \quad\quad\quad \textbf{if} $(\sidd{v}\not\in \ahash{})$ \textbf{then} 

19 \ \ \ \quad\quad\quad\quad $n_{D}\leftarrow n_{D}+1$

20 \ \ \quad\quad\quad\quad $\ahash{}[\sidd{v}]\leftarrow (n_{D}, v, 1)$

21 \ \ \ \quad\quad\quad \textbf{else}


22 \ \ \quad\quad\quad\quad $\ahash{}[\sidd{v}].s \leftarrow \ahash{}[\sidd{v}].s+1$

23 \ \ \quad\quad\quad $\aset{i} \leftarrow \aset{i} \cup \{v\}$

24 \ \ \quad\quad\quad $\sidd{v} \leftarrow n_{D}$

25 \ \ \quad \quad  \textbf{else} stop expansion from $v$

26 \ \ \quad $\lambda \leftarrow 0$

27 \ \ \quad \textbf{foreach} $((id, a, s_A)\in \ahash{}, (id, d, s_D)\in \ahash{})$ \textbf{do}

28 \ \ \quad\quad    \textbf{if} $(\lbOut{a}{i-1}\bigcap \lbIn{d}{i-1} \neq \emptyset)$ \textbf{then}   \hfill /*$\phoplabel{i-1}$*/

29 \ \ \quad\quad\quad $\lambda \leftarrow \lambda + |s_A|\times|s_D|$

30 \ \ \quad $\phopNk{i} \leftarrow |\aset{i}|\times |\dset{i}|-1-\lambda$

31 \ \ \quad $\phopNum{i} \leftarrow \phopNum{i-1} + \phopNk{i}$

32 \ \ \quad $\alpha \leftarrow \phopNum{i}/\tcv{G}$ \hfill /*\rr of $\Sk{i}$*/

33 \ \ \quad \textbf{foreach} $(a\in \aset{i})$ \textbf{do} \hfill /*compute $\phoplabel{i}$*/

34 \ \ \quad\quad\quad $\lbOut{a}{i} \leftarrow \lbOut{a}{i-1}\bigcup \{v_i\}$

35 \ \ \quad \textbf{foreach} $(d\in \dset{i})$ \textbf{do}\hfill /*compute $\phoplabel{i}$*/

36 \ \ \quad\quad\quad $\lbIn{d}{i} \leftarrow \lbIn{d}{i-1}\bigcup \{v_{i}\}$

37 \ \textbf{return} $\alpha$ as \rr of $\Sk{k}$

\end{algorithm}

\stitle{Analysis:} Same as Algorithm~\ref{algorithm:baseline} and Algorithm~\ref{algorithm:icrRR}, Algorithm~\ref{algorithm:icrRRplus} completes Step-1 by performing both forward and backward \bfs from each \hnode, during which it first computes the two sets $\aset{i}$ and $\dset{i}$ in lines 1-25, and at the same time computes $\ahash{}$ and $\dhash{}$. Then, it computes the new \phop $\phoplabel{i}$ in lines 33-36. The time cost is $O(k^2(|V|+|E|)$.
Different with Algorithm~\ref{algorithm:baseline} and Algorithm~\ref{algorithm:icrRR}, the benefits of Algorithm~\ref{algorithm:icrRRplus} lies in Step-2. The cost of Step-2 for each \hnode is $O(i|\parta{i}||\partd{i}|)$. For $k$ \hnode, the cost is $O(\sum_{i\in [1,k]}i|\parta{i}||\partd{i}|)$.
Therefore, the time complexity of Algorithm~\ref{algorithm:icrRRplus} is $O(k^2(|V|+|E|)+\sum_{i\in [1,k]}i|\parta{i}||\partd{i}|)$.

Similar to Algorithm~\ref{algorithm:baseline}, we need to maintain the 2-hop labels w.r.t. at most $k$ \hnodes during the processing. As
$\aset{i}$,
$\dset{i}$, $\ahash{}$ and $\dhash{}$ are bounded by $V$, and $\aset{i}(\dset{i})$ can be used to store nodes of $\aset{i+1}(\dset{i+1})$,
the space complexity of Algorithm~\ref{algorithm:icrRR} is $O(k|V|)$.

By comparing Algorithm~\ref{algorithm:baseline}, Algorithm~\ref{algorithm:icrRR} and Algorithm~\ref{algorithm:icrRRplus}, we know that the difference of the three algorithms lies in how to compute the \rr, i.e., Step-2. In Table~\ref{table:comparison}, we show the comparison of their time and space complexities. For time complexity, we do not show the cost of Step-1, due to that for Step-1, the cost is same for all three algorithms. We will show in Experiment that the incRR$^+$ algorithm works much more efficiently than the blRR and incRR algorithms.

\begin{table}[t]
\caption{Comparison of time and space complexities, where $A = \cup_{i\in [1,k]} \aset{i}$, $D = \cup_{i\in [1,k]} \dset{i}$, and $\parta{i}(\partd{i})$ is the partition of $\aset{i}(\dset{i})$.}
\label{table:comparison}

\centering


\begin{tabular} {|c|l|l|} \hline

       Algorithm & Time Complexity of Step-2 & Space Complexity \\

\hline
\hline	blRR	          	& $O(k|A||D|)$	& $O(k|V|)$\\
\hline	incRR	          	&	$O(\sum_{i\in [1,k]}i|\aset{i}||\dset{i}|)$ & $O(k|V|)$ \\
\hline	incRR$^+$	        & $O(\sum_{i\in [1,k]}i|\parta{i}||\partd{i}|)$	~~~~~~~~~~~~~	& $O(k|V|)$ \\

\hline
\end{tabular}
\end{table}

\vspace{3mm}

\begin{example} \label{example:incremental3} Consider $G$ in Figure~\ref{graph:graph}. Assume that we want to compute the \rr of $\Sk{3}=\{v_1,v_2,v_3\}$.

For $v_1$, as it is the first processed node, there is no covered reachability relationship, thus we do not need to test any reachability relationship in lines 28. As $|\aset{1}|=4$, $|\dset{1}|=7$, thus $\phopNk{1}=|\aset{1}|\times |\dset{1}|-1=27$ in line 30 of Algorithm~\ref{algorithm:icrRRplus}.

For $v_2$, as both $|\parta{2}|=|\partd{2}|=1$, we only need to test one reachable query, i.e., whether $\query{v_3}{v_{10}}$ can be answered by $\phoplabel{1}$. As $\lbOut{v_3}{1}\cap \lbIn{v_{10}}{1}=\emptyset$, we know that $\phopNk{2}=|\aset{2}|\times |\dset{2}|-1=15$ in line 30 of Algorithm~\ref{algorithm:icrRRplus}.

For $v_3$, as shown by Figure~\ref{graph:hash}(c), we know that $\parta{3}=\{\{v_3, v_5\},\{v_4, v_6, v_{11}\}\}$ and $\partd{3}=\{\{v_3,v_8, v_{14}\},\{v_7, v_{9}\}\}$. In line 28, we only need to test $|\parta{3}|\times |\partd{3}|=2\times 2=4$ reachable queries.
As $v_4$ can reach $v_7$ can be answered by $\phoplabel{2}$, we know that all nodes in $\{v_4, v_6, v_{11}\}$ can reach every node in $\{v_7, v_{9}\}$ can be answered by $\phoplabel{2}$, thus $\lambda=6$ for $v_3$. In line 30, we know that $\phopNk{3}=|\aset{3}|\times |\dset{3}|-1-\lambda=5\times 5-1-6=18$.

Then, we know that $\phopNum{3}=\phopNk{1}+\phopNk{2}+\phopNk{3}=27+15+18=60$, and the \rr is $\alpha = \phopNum{3}/\tcv{G}=60/70=85.7$\%. And during the processing, the total number of tested reachability queries by Algorithm~\ref{algorithm:icrRRplus} is 5.

As a comparison, Algorithm~\ref{algorithm:icrRR} needs to test 16 reachability queries for $v_2$, due to that $|\aset{2}|=|\dset{2}|=4$. For $v_3$, Algorithm~\ref{algorithm:icrRR} needs to test 25 reachability queries, due to that $|\aset{3}|=|\dset{3}|=5$. The total number of tested reachability queries for Algorithm~\ref{algorithm:icrRR} is 41.

Consider Algorithm~\ref{algorithm:baseline}, $|\aset{1}\cup\aset{2}\cup\aset{3}|=8$, $|\dset{1}\cup\dset{2}\cup\dset{3}|=10$, thus the total number of tested reachability queries for Algorithm~\ref{algorithm:baseline} is 80 to get the \rr.
\hfill ~ $\Box$
\end{example}

\vspace{3mm}

\begin{table}[t]
\caption{Statistics of datasets, where $d=2|\E|/|\V|$ is the average degree of
  $\G$,  $\tcv{\cdot}$ is the average number of reachable nodes for nodes of $G$, and $n_t$ is the number of topological levels (the length of the longest path) of $G$.\label{table:statistics-realdatasets}}

\centering



\begin{tabular} {|@{}c@{}||r|r|r|r|r|} \hline
        Dataset   & $|\V|$   &$|\E|$  &$d$ & $\tcv{\cdot}$ & $n_t$\\
\hline\hline
\amaze	&	3,710 	&	3,600 	&	1.94 	&	639	&	16		\\ \hline
\human	&	38,811 	&	39,576 	&	2.04 	&	9	&	18		\\ \hline
\anthra	&	12,499 	&	13,104 	&	2.10 	&	12	&	16		\\ \hline
\agrocyc	&	12,684 	&	13,408 	&	2.11 	&	13	&	16		\\ \hline
\ecoo	&	12,620 	&	13,350 	&	2.12 	&	14	&	22		\\ \hline
\vchocyc	&	9,491 	&	10,143 	&	2.14 	&	14	&	21		\\ \hline
\kegg	&	3,617 	&	3,908 	&	2.16 	&	729	&	26		\\ \hline
\arxiv	&	6,000 	&	66,707 	&	22.24 	&	928	&	167		\\ \hline\hline
\mail	&	231,000 	&	223,004 	&	1.93 	&	11,698	&	7		\\ \hline
\lj	&	971,232 	&	1,024,140 	&	2.11 	&	206,907	&	24		\\ \hline
\web	&	371,764 	&	517,805 	&	2.79 	&	55,055	&	34		\\ \hline
\patten	&	1,097,775 	&	1,651,894 	&	3.01 	&	3	&	7		\\ \hline
\citeseerten	&	770,539 	&	1,501,126 	&	3.90 	&	70	&	36		\\ \hline
~\patfive~	&	1,671,488 	&	3,303,789 	&	3.95 	&	8	&	12		\\ \hline
\citeseerfive	&	1,457,057 	&	3,002,252 	&	4.12 	&	116	&	36		\\ \hline
\citeseerxx	&	6,540,401 	&	15,011,260 	&	4.59 	&	15,510	&	59		\\ \hline
\dbp	&	3,365,623 	&	7,989,191 	&	4.75 	&	83,659	&	146		\\ \hline
\pat	&	3,774,768 	&	16,518,947 	&	8.75 	&	1,544	&	32		\\ \hline
\twitter	&	18,121,168 	&	18,359,487 	&	2.03 	&	1,346,820	&	22		\\ \hline
\webuk	&	22,753,644 	&	38,184,039 	&	3.36 	&	3,417,930	&	2793		\\ \hline

\end{tabular}
\end{table}

\section{Experiment}
\label{section:experiment}

In this section, we show experimental results on \rr computation. The compared algorithms include \blrr, \incrr, and \incrrplus. Moreover, we show the impacts of \phop on reachability queries processing based on the state-of-the-art algorithm \fl~\cite{DBLP:conf/edbt/VelosoCJZ14} in terms of index size, index construction time, and query time.
We implemented all algorithms using C++ and compiled by G++ 6.2.0.
All experiments
were run on a PC with Intel(R) Core(TM) i5-3230M CPU @ 3.0 GHz CPU, 16 GB
memory, and Ubuntu 18.04.1 Linux OS.
For algorithms that run $\geq
24$ hours or exceed the memory limit (16GB), we will show their
results as ``--'' in the tables.

\stitle{Datasets:}
Table~\ref{table:statistics-realdatasets} shows the statistics of 20
real datasets, where the first eight are small datasets ($|\V|\leq 100,000$) downloaded from the same web page$\footnote{\label{dataset:google-p}https://code.google.com/archive/p/grail/downloads}$.
The following 12 datasets are large ones ($|\V|> 100,000$).
These datasets are usually used in
the recent works w.r.t. reachability queries processing~\cite{DBLP:journals/vldb/YildirimCZ12,DBLP:conf/cikm/YanoAIY13, DBLP:conf/sigmod/ChengHWF13,DBLP:conf/sigmod/ZhuLWX14,DBLP:journals/tkde/SuZWY17,DBLP:journals/pvldb/WeiYLJ14,DBLP:conf/edbt/VelosoCJZ14,DBLP:conf/icde/SeufertABW13,
DBLP:journals/pvldb/JinW13,DBLP:conf/sigmod/ZhouZYWCT17}.
Among these datasets, \amaze and \kegg are metabolic networks, \human, \anthra, \agrocyc, \ecoo, \vchocyc are graphs describing the genome and biochemical machinery of E. coli K-12 MG1655.
\mail{}\footnote{\label{dataset:stanford}http://snap.stanford.edu/data/index.html} is an email network.
\lj is an online social network soc-LiveJournal1$^{\text{\ref{dataset:stanford}}}$.
\web is a web graph web-Google\footnote{\label{dataset:ferrari}https://code.google.com/p/ferrari-index/downloads/list}.
\arxiv, \patten{}\footnote{\label{dataset:baidu}http://pan.baidu.com/s/1bpHkFJx} , \citeseerten{}\text{$^{\ref{dataset:baidu}}$}, \patfive{}\text{$^{\ref{dataset:baidu}}$},
\citeseerfive{}\text{$^{\ref{dataset:baidu}}$},
\citeseerxx{}$^{\text{\ref{dataset:google-p}}}$ and \pat{}$^{\text{\ref{dataset:google-p}}}$ (cit-Patents) are all citation networks.
\dbp\footnote{http://pan.baidu.com/s/1c00Jq5E} is a knowledge graph Dbpedia.
\twitter{}$^{\text{\ref{dataset:ferrari}}}$ is a \DAG
transformed from a large-scale social network obtained from a crawl
of twitter.com~\cite{DBLP:conf/icwsm/ChaHBG10}.
\webuk{}$^{\text{\ref{dataset:ferrari}}}$ is a \DAG of a web graph
dataset.
For these datasets, \mail, \lj, \web, and the first seven small graphs are directed graphs initially. We transform each of them into a \DAG by coalescing each strongly connected component into a node. Note that this can be done in linear
time~\cite{DBLP:journals/siamcomp/Tarjan72}. All other datasets are \DAGs initially.
The statistics in Table~\ref{table:statistics-realdatasets} are that of \DAGs.

\subsection{Reachability Ratio Computation}
\stitle{Reachability Ratio and Index Size:} We show the \rr (RR) and the index size ratio (ISR) of the 20 real datasets in Figure~\ref{graph:reach-ratio}, where ISR denotes the ratio of the size of \phop w.r.t. $k$ \hnodes over the size of the 2-hop labels w.r.t. all nodes. From Figure~\ref{graph:reach-ratio} we have the following observation.

First, we can classify all datasets into three categories according to the value of their \rr. The first kind of datasets (\textbf{D1}) include \amaze, \kegg, \mail, \lj, \web, \citeseerxx, \dbp, \twitter, and \webuk, for which the RR is more than 99\% even when $k=1$, and both the RR and ISR almost do not change with the increase of $k$.
The second kind of datasets (\textbf{D2}) includes \human, \anthra, \agrocyc, \ecoo, \vchocyc, and \arxiv, for which both RR and ISR will become larger with the increase of $k$.
The third kind of datasets (\textbf{D3}) includes \patten, \citeseerten, \patfive, \citeseerfive, and \pat, for which both RR and ISR are very small or even approach zero; and with the increase of $k$, both RR and ISR almost do not change.
The value of $k$, therefore, only affects the second kind of datasets, and the \rr is more than 80\% when $k\geq 16$ for all datasets of the second kind, which indicates that we may benefit from using \phop on datasets of both the first and second kinds.

Second, the storage space used to maintain \phop is small compared with the \rr value. For example, for the first kind of datasets, we can use about $1/4$ storage space (ISR~$\approx25$\%) to maintain more than 99\% (RR $>99$\%) reachability information.

\begin{figure*}[htbp]
\centering
\subfigure[\amaze]{
\begin{minipage}{4.5cm}
\includegraphics{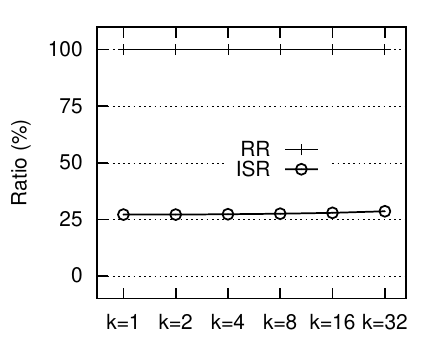}
\end{minipage}%
}%
\subfigure[\human]{
\begin{minipage}{4.5cm}
\includegraphics{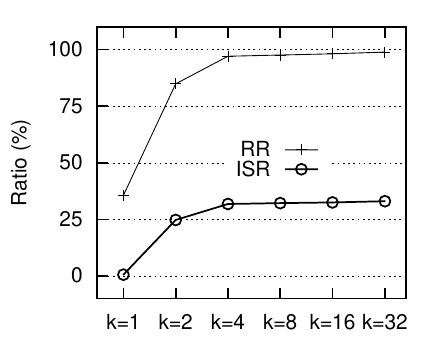}
\end{minipage}%
}%
\subfigure[\anthra]{
\begin{minipage}{4.5cm}
\includegraphics{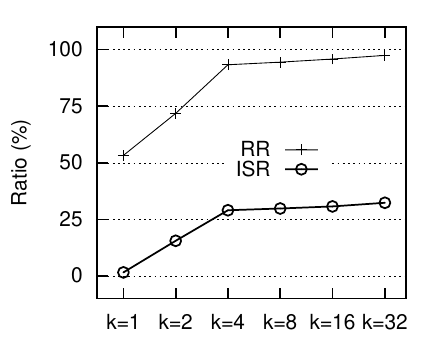}
\end{minipage}%
}%
\subfigure[\agrocyc]{
\begin{minipage}{4.5cm}
\includegraphics{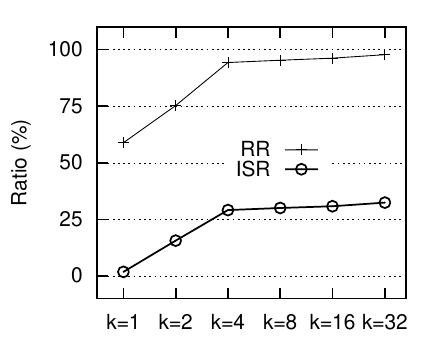}
\end{minipage}%
}%

\subfigure[\ecoo]{
\begin{minipage}{4.5cm}
\includegraphics{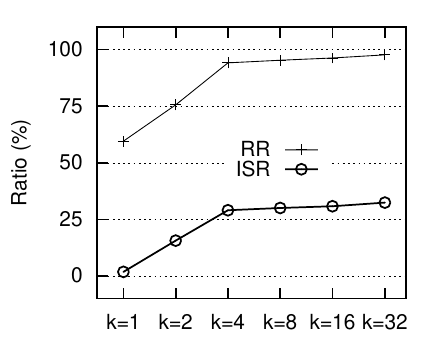}
\end{minipage}%
}%
\subfigure[\vchocyc]{
\begin{minipage}{4.5cm}
\includegraphics{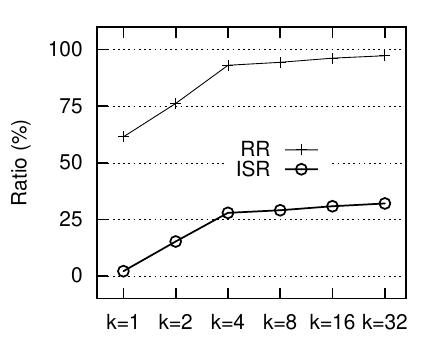}
\end{minipage}%
}%
\subfigure[\kegg]{
\begin{minipage}{4.5cm}
\includegraphics{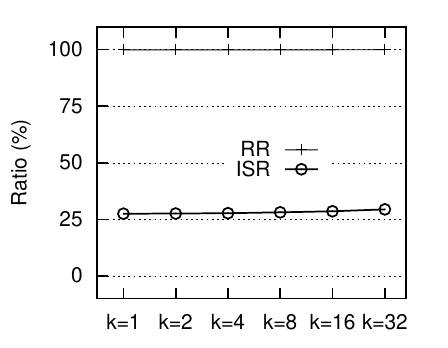}
\end{minipage}%
}%
\subfigure[\arxiv]{
\begin{minipage}{4.5cm}
\includegraphics{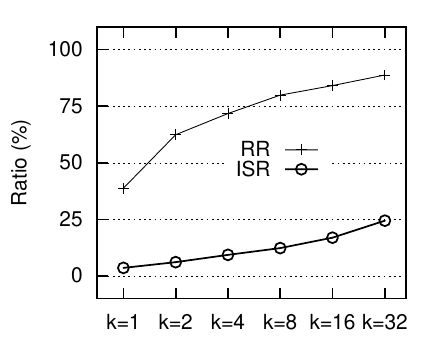}
\end{minipage}%
}%

\subfigure[\mail]{
\begin{minipage}{4.5cm}
\includegraphics{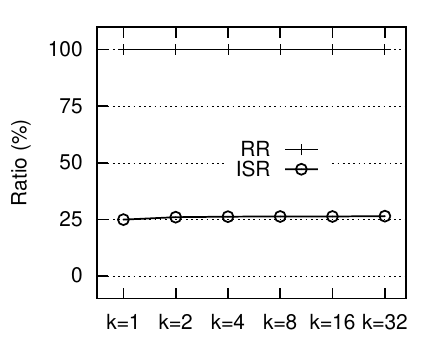}
\end{minipage}%
}%
\subfigure[\lj]{
\begin{minipage}{4.5cm}
\includegraphics{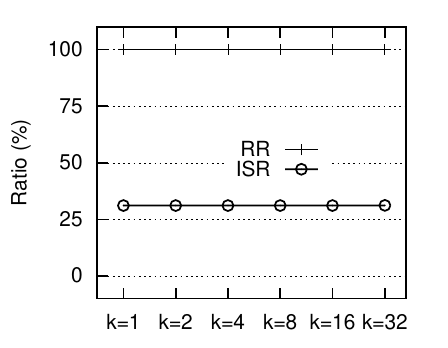}
\end{minipage}%
}%
\subfigure[\web]{
\begin{minipage}{4.5cm}
\includegraphics{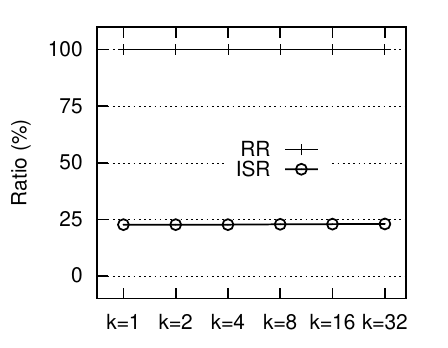}
\end{minipage}%
}%
\subfigure[\patten]{
\begin{minipage}{4.5cm}
\includegraphics{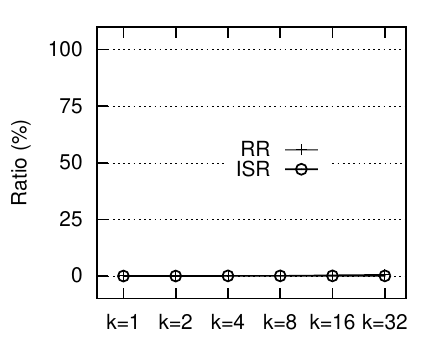}
\end{minipage}%
}%

\subfigure[\citeseerten]{
\begin{minipage}{4.5cm}
\includegraphics{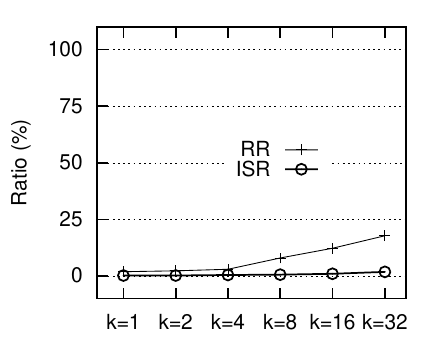}
\end{minipage}%
}%
\subfigure[\patfive]{
\begin{minipage}{4.5cm}
\includegraphics{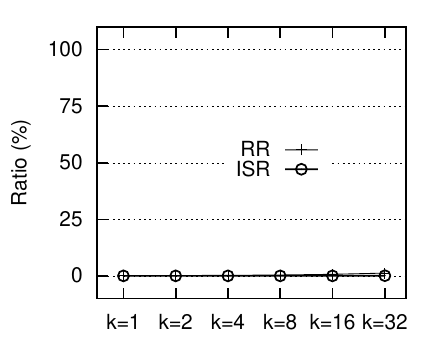}
\end{minipage}%
}%
\subfigure[\citeseerfive]{
\begin{minipage}{4.5cm}
\includegraphics{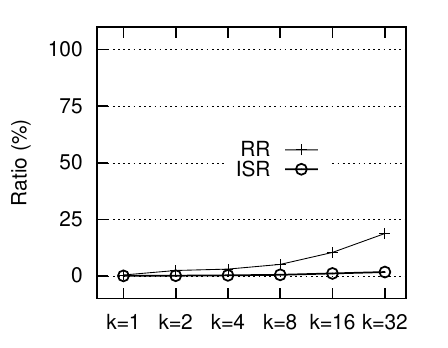}
\end{minipage}%
}%
\subfigure[\citeseerxx]{
\begin{minipage}{4.5cm}
\includegraphics{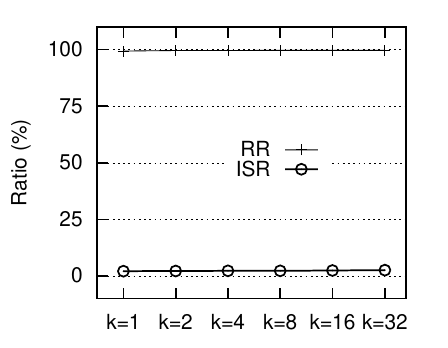}
\end{minipage}%
}%

\subfigure[\dbp]{
\begin{minipage}{4.5cm}
\includegraphics{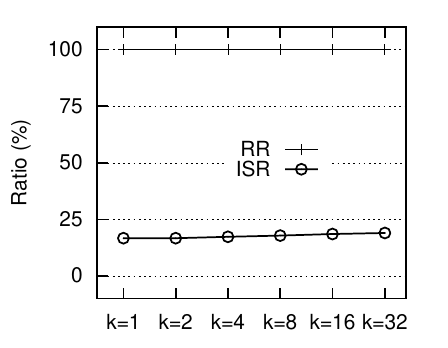}
\end{minipage}%
}%
\subfigure[\pat]{
\begin{minipage}{4.5cm}
\includegraphics{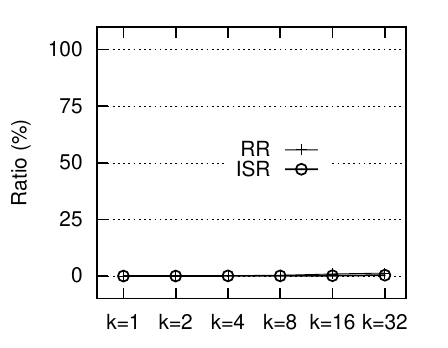}
\end{minipage}%
}%
\subfigure[\twitter]{
\begin{minipage}{4.5cm}
\includegraphics{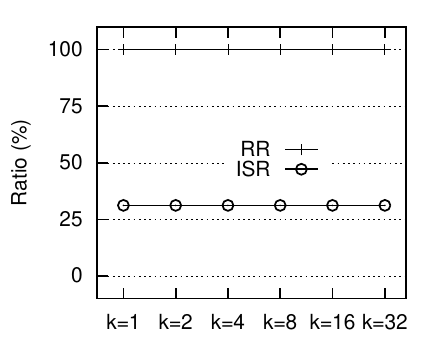}
\end{minipage}%
}%
\subfigure[\webuk]{
\begin{minipage}{4.5cm}
\includegraphics{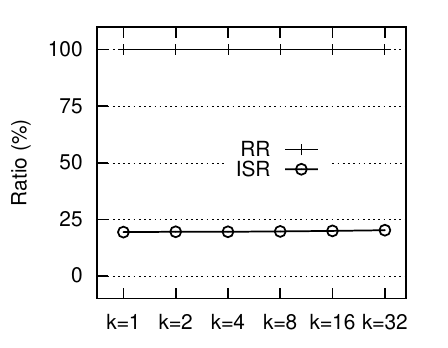}
\end{minipage}%
}%
\centering
\caption{Comparison of Reachability Ratio (RR) and Index Size Ratio (ISR), where ISR is ratio of the index size of \phop w.r.t. $k$ hop nodes over that of the total 2-hop label size w.r.t. all nodes.}\label{graph:reach-ratio}
\end{figure*}

\stitle{Running Time:} We show in Figure~\ref{graph:runningtime-reach-ratio} the comparison of running time for \rr computation, from which we have the following observations.

First, \incrrplus is much faster than both \blrr and \incrr on all datasets, and \incrr works faster than \blrr on most datasets. For instance, \incrrplus is faster than \blrr by more than two or three orders of magnitude on most datasets, and \incrr is ten times faster than \blrr on \amaze, \mail, \lj, \web, \citeseerxx, and \dbp. The reason can be explained as follows. On one hand, Figure~\ref{graph:reach-ratio} shows the \rr of different graphs w.r.t. different $k$. From Figure~\ref{graph:reach-ratio} we know that for \amaze, \human, \anthra, \agrocyc, \ecoo, \vchocyc, \kegg, \arxiv, \mail, \lj, \web, \citeseerxx, \dbp, \twitter and \webuk, the \rr is more than 80\% when $k=32$ for all datasets. On the other hand, according to the last to the second column of Table~\ref{table:statistics-realdatasets}, we know that the average number of reachable nodes for nodes of each graph is usually big. Therefore, the number of tested reachability queries by \blrr is significantly large. Even though \incrr can reduce the number of tested reachability queries, it still needs to test much more reachability queries than \incrrplus for some datasets. For example, consider the number of tested reachability queries on \kegg dataset when $k=32$. The tested number of reachability queries of \blrr (\incrr) is 100 (10) times more than that of \incrrplus. Moreover,
\blrr and \incrr cannot get the value of \rr on both \twitter and \webuk for $k\geq 2$ in limited time (24 hours), due to testing too many reachability queries.

\begin{figure*}[htbp]
\centering
\subfigure[\amaze]{
\begin{minipage}{4.5cm}
\includegraphics{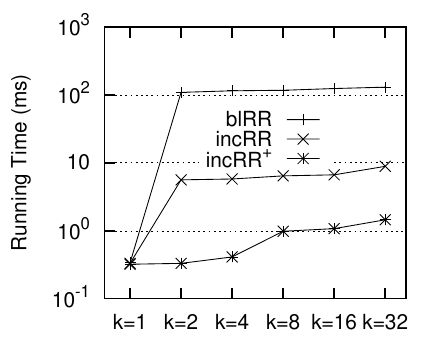}
\end{minipage}%
}%
\subfigure[\human]{
\begin{minipage}{4.5cm}
\includegraphics{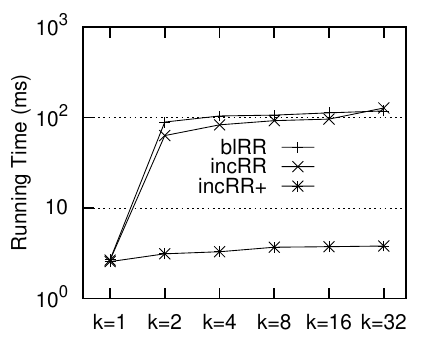}
\end{minipage}%
}%
\subfigure[\anthra]{
\begin{minipage}{4.5cm}
\includegraphics{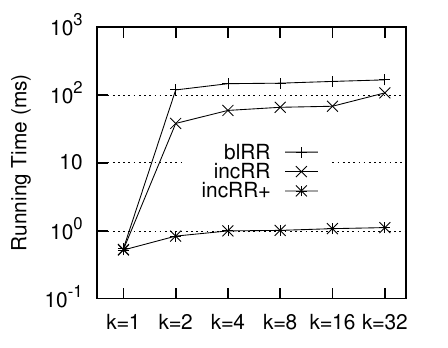}
\end{minipage}%
}%
\subfigure[\agrocyc]{
\begin{minipage}{4.5cm}
\includegraphics{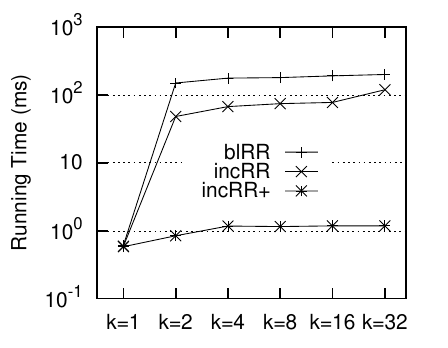}
\end{minipage}%
}%

\subfigure[\ecoo]{
\begin{minipage}{4.5cm}
\includegraphics{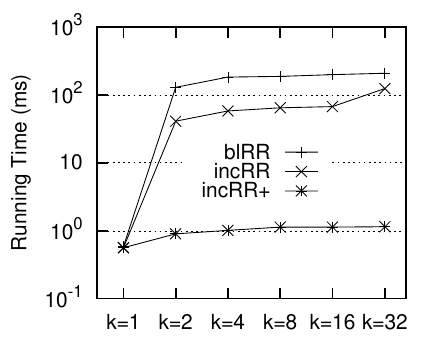}
\end{minipage}%
}%
\subfigure[\vchocyc]{
\begin{minipage}{4.5cm}
\includegraphics{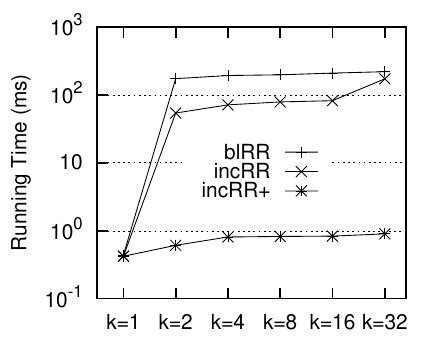}
\end{minipage}%
}%
\subfigure[\kegg]{
\begin{minipage}{4.5cm}
\includegraphics{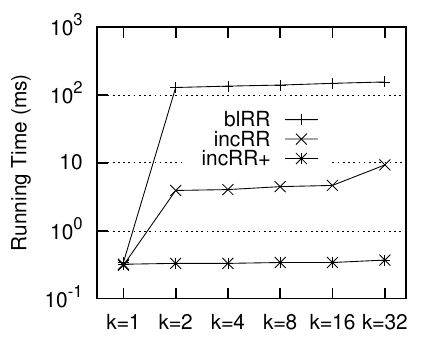}
\end{minipage}%
}%
\subfigure[\arxiv]{
\begin{minipage}{4.5cm}
\includegraphics{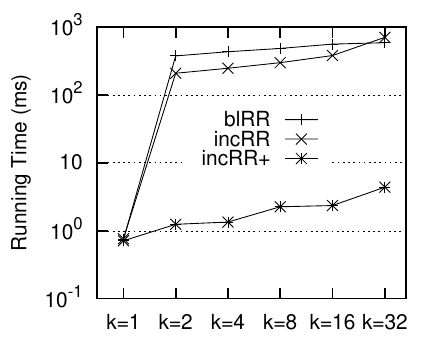}
\end{minipage}%
}%

\subfigure[\mail]{
\begin{minipage}{4.5cm}
\includegraphics{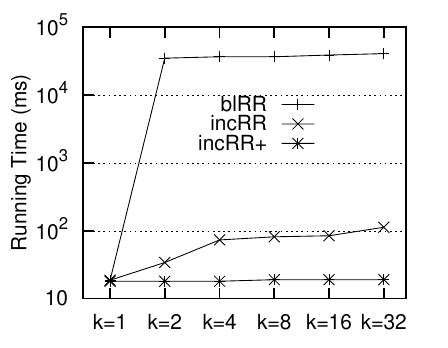}
\end{minipage}%
}%
\subfigure[\lj]{
\begin{minipage}{4.5cm}
\includegraphics{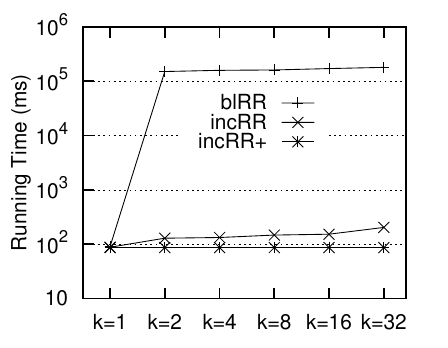}
\end{minipage}%
}%
\subfigure[\web]{
\begin{minipage}{4.5cm}
\includegraphics{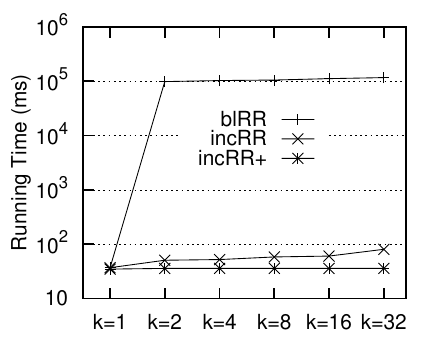}
\end{minipage}%
}%
\subfigure[\patten]{
\begin{minipage}{4.5cm}
\includegraphics{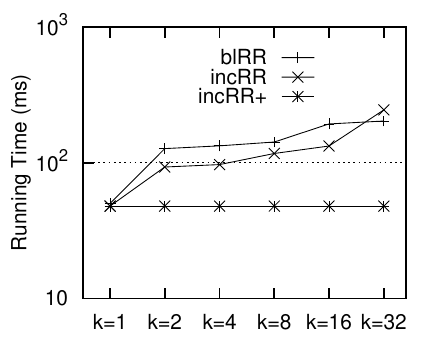}
\end{minipage}%
}%

\subfigure[\citeseerten]{
\begin{minipage}{4.5cm}
\includegraphics{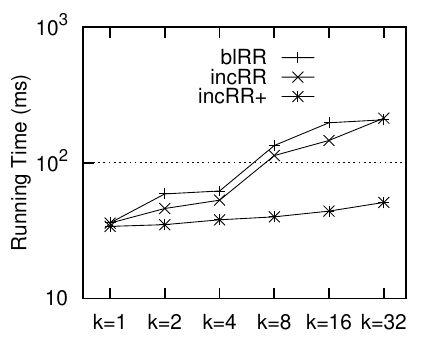}
\end{minipage}%
}%
\subfigure[\patfive]{
\begin{minipage}{4.5cm}
\includegraphics{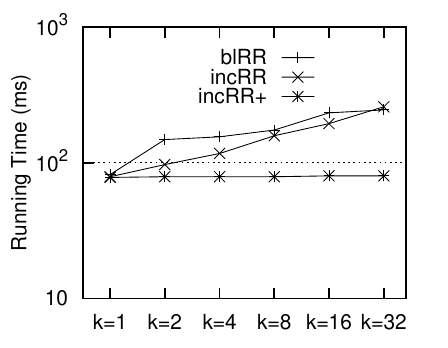}
\end{minipage}%
}%
\subfigure[\citeseerfive]{
\begin{minipage}{4.5cm}
\includegraphics{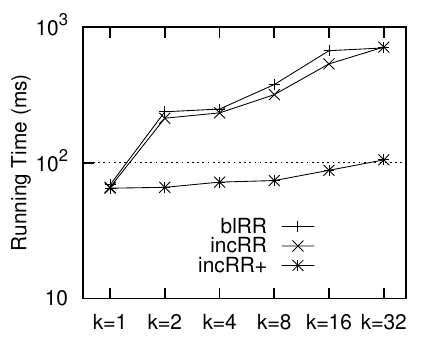}
\end{minipage}%
}%
\subfigure[\citeseerxx]{
\begin{minipage}{4.5cm}
\includegraphics{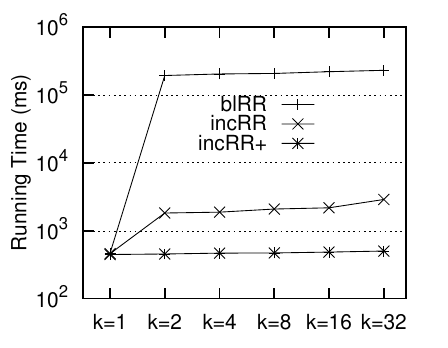}
\end{minipage}%
}%

\subfigure[\dbp]{
\begin{minipage}{4.5cm}
\includegraphics{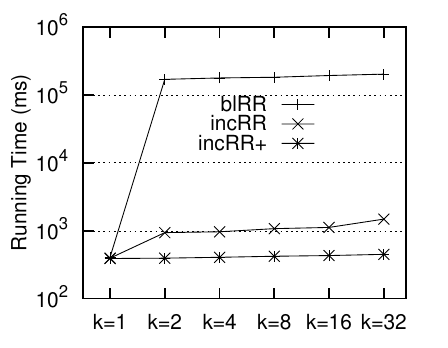}
\end{minipage}%
}%
\subfigure[\pat]{
\begin{minipage}{4.5cm}
\includegraphics{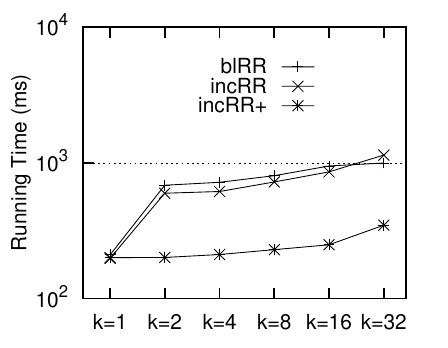}
\end{minipage}%
}%
\subfigure[\twitter]{
\begin{minipage}{4.5cm}
\includegraphics{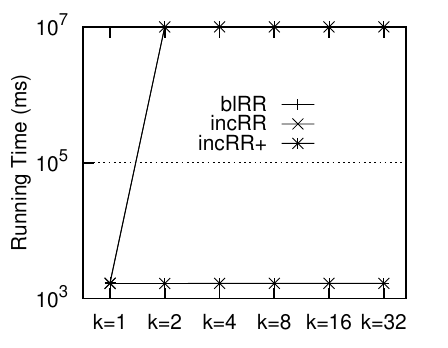}
\end{minipage}%
}%
\subfigure[\webuk]{
\begin{minipage}{4.5cm}
\includegraphics{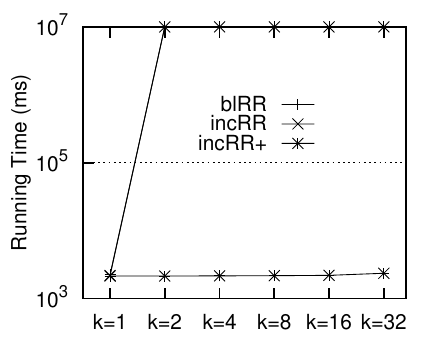}
\end{minipage}%
}%
\centering
\caption{Running Time of different algorithms on \rr computation (ms).}
\label{graph:runningtime-reach-ratio}
\end{figure*}

Second, both \blrr and \incrr work efficiently on datasets where the \rr is small. For instance, from Figure~\ref{graph:runningtime-reach-ratio} we know that \incrrplus is faster than \blrr and \incrr by less than ten times on \patten, \citeseerten, \patfive, \citeseerfive, and \pat. The reason lies in that for these datasets, the \rr is very small according to Figure~\ref{graph:reach-ratio}, which means that for all algorithms, the number of tested reachability queries is much less than other datasets, therefore does not need to consume more time.

It is worth noting that when $k=1$, the three algorithms consume similar time. The reason is that when $k=1$, for the first \hnode $v_1$, after we get $\aset{v_1}$ and $\dset{v_1}$, we immediately know the number of reachability queries covered by $\phoplabel{1}$ is $|\aset{v_1}|\times |\dset{v_1}|-1$, and therefore do not need to actually test any reachability queries.

By the above experimental results, we know that our \incrrplus algorithm can be used to efficiently compute the \rr for a given dataset, which brings us a chance to determine whether we should use \phop to facilitate reachability queries processing.

\subsection{Reachability Queries Processing}

In this section, we combine \phop with the state-of-the-art algorithm, namely \feline~\cite{DBLP:conf/edbt/VelosoCJZ14} (abbreviated as \fl), to show the impact of \phop on reachability queries processing, in terms of index size, index construction time and query time. The experimental results are shown, respectively, in Table~\ref{table:reach-indexsize}, Table~\ref{table:reach-indexingtime} and Table~\ref{table:reach-querytime}, where \fl-$k$ denotes the \fl algorithm combined with \phop that are generated based on $k$ \hnodes. Hence, \fl-0 is the \fl algorithm without \phop. Note that for reachability queries processing, we do not set $k=1,2,4,8$, due to that when $k=16$, we only need to use one integer as a bit-vector for each node $v$ to represent both $\lbOut{v}{16}$ and $\lbIn{v}{16}$.

\stitle{Index Size:} Table~\ref{table:reach-indexsize} shows the impacts of $k$ on index size, from which we know that with the increase of $k$, the index size will increase accordingly. For example, for \webuk dataset, the index size of \fl-128 is more than two times bigger than that of \fl-0 on all datasets. The reason is obvious. The larger the value of $k$, the more the space we need to maintain the \phop.

\stitle{Index Construction Time:} Figure~\ref{table:reach-indexingtime} shows the impacts of $k$ on index construction time, from which we know that with the increase of $k$, we need more time for index construction. Note that \phop can be constructed efficiently, and the increased time for index construction could be omitted, due to that index construction is a one-time activity performed off-line for reachability queries processing.

\stitle{Query Time:} We report the query time about equal workload, which contains 1,000,000 reachability queries for each dataset. The equal workload consists of 50\% reachable queries and 50\% unreachable queries. The reason that we use equal workload is:
using completely random queries is heavily skewed towards unreachable queries~\cite{DBLP:journals/pvldb/YildirimCZ10, DBLP:journals/vldb/YildirimCZ12}, which is highly unlikely for the real workload as the node pair in a query tends
to have a certain connection~\cite{DBLP:conf/sigmod/JinRDY12}. Here, unreachable queries are generated by sampling node pairs with the same probability until we reach the required number of unreachable queries by testing each query using the \fl algorithm. For reachable queries, we cannot choose them randomly by sampling the \tc, because \tc~ computation suffers from high time and space complexity, we cannot get it within limited time and memory size for large graphs.
To this problem, we randomly
pick a node $u$ in each iteration, then randomly select an out-neighbor $v$
recursively until $v$ has no out-neighbor. Then, we have a path $p$ from $u$.
At last, we randomly select a node $v\neq u$ from $p$ to get a reachable query $\querytrue{u}{v}$.
This operation will be continued until we reach the required number
of reachable queries.

We show the comparison of query time for \fl-0 to \fl-128 in Table~\ref{table:reach-querytime}, from which we have the following observations.

First, \fl-16 and \fl-32 usually need the least time on the first kind of datasets \textbf{D1}, including
\amaze, \kegg, \mail, \lj, \web, \citeseerxx, \dbp, \twitter and \webuk,
where the \rr is more than 99\% even when $k=1$. For these datasets, although the index size becomes larger and the index construction time becomes longer than that of \fl-0, we use the least cost to achieve significant improvements. For example, compared with \fl-0, \fl-16 and \fl-32 use about 1.5 times index size and 1.2 times index construction time to achieve more than 1,000 times improvements on query time.

Second, \fl-128 suffers from the largest index size (about 3 times bigger than \fl-0) and longest index construction time (about 1.3 times longer than \fl-0), but achieves the best query performance on the second kind of datasets \textbf{D2}, due to that on these datasets, the \rr will become larger with the increase of $k$. These datasets include \human, \anthra, \agrocyc, \ecoo, \vchocyc and \arxiv.

Third, for the third kind of datasets \textbf{D3}, including \patten, \citeseerten, \patfive, \citeseerfive, and \pat, the \rr is very small or even approach zero, and almost does not change with the increase of $k$. For these datasets, \fl-0 works best and the use of \phop cannot bring us any positive results. For example, compared with \fl-0 on \patfive, the index size of \fl-128 is 2.6 times bigger than \fl-0, and the index construction time and query time of \fl-128 are 1.04 and 1.5 times longer than that of \fl-0.

At last, we choose one dataset from each kind and show the trend of its query time w.r.t. $k$ in Figure~\ref{graph:rchquerytime}, from which we can give out the suggestions on how to use \phop: (1) For the first kind of datasets \textbf{D1}, we highly recommend using \phop with $k=16$ to process reachability queries, due to that we can speed up reachability queries answering significantly by affording only a little more index size and index construction time. (2) For the second kind of datasets \textbf{D2}, we also recommend using \phop, due to that we can speed up reachability queries answering by \phop. But for the value of $k$, it depends on your concerns on how much you could and would like to afford for the increased index size and index construction time. In general, the larger the value of $k$, the less the query time, but the more the index construction time and the bigger the index size. (3) For the third kind of datasets \textbf{D3}, we do not recommend using \phop to process reachability queries.

\begin{table}[t]
\caption{Comparison of the index size (MB).\label{table:reach-indexsize}}

\centering



\begin{tabular} {|c||r|r|r|r|r|} \hline
        Dataset   & \fl-0   & \fl-16  & \fl-32 & \quad \fl-64 & \quad \fl-128  \\
\hline\hline
\amaze	&	0.07 	&	0.08 	&	0.10 	&	0.13 	&	0.18 		\\ \hline
\human	&	0.74 	&	0.89 	&	1.04 	&	1.33 	&	1.92 		\\ \hline
\anthra	&	0.24 	&	0.29 	&	0.33 	&	0.43 	&	0.62 		\\ \hline
\agrocyc	&	0.24 	&	0.29 	&	0.34 	&	0.43 	&	0.63 		\\ \hline
\ecoo	&	0.24 	&	0.29 	&	0.34 	&	0.43 	&	0.62 		\\ \hline
\vchocyc	&	0.18 	&	0.22 	&	0.25 	&	0.32 	&	0.47 		\\ \hline
\kegg	&	0.07 	&	0.08 	&	0.10 	&	0.12 	&	0.18 		\\ \hline
\arxiv	&	0.11 	&	0.14 	&	0.16 	&	0.20 	&	0.30 		\\ \hline\hline
\mail	&	4.4 	&	5.3 	&	6.2 	&	7.9 	&	11.5 		\\ \hline
\lj	&	18.5 	&	22.2 	&	25.9 	&	33.3 	&	48.2 		\\ \hline
\web	&	7.1 	&	8.5 	&	9.9 	&	12.8 	&	18.4 		\\ \hline
\patten	&	20.9 	&	25.1 	&	29.3 	&	37.7 	&	54.4 		\\ \hline
\citeseerten	&	14.7 	&	17.6 	&	20.6 	&	26.5 	&	38.2 		\\ \hline
\patfive	&	31.9 	&	38.3 	&	44.6 	&	57.4 	&	82.9 		\\ \hline
\citeseerfive	&	27.8 	&	33.3 	&	38.9 	&	50.0 	&	72.3 		\\ \hline
\citeseerxx	&	124.7 	&	149.7 	&	174.6 	&	224.5 	&	324.3 		\\ \hline
\dbp	&	64.2 	&	77.0 	&	89.9 	&	115.5 	&	166.9 		\\ \hline
\pat	&	72.0 	&	86.4 	&	100.8 	&	129.6 	&	187.2 		\\ \hline
\twitter	&	345.6 	&	414.8 	&	483.9 	&	622.1 	&	898.6 		\\ \hline
\webuk	&	434.0 	&	520.8 	&	607.6 	&	781.2 	&	1,128.4 		\\ \hline

\end{tabular}
\end{table}

\begin{table}[t]
\caption{Comparison of the index construction time (ms).\label{table:reach-indexingtime}}

\centering



\begin{tabular} {|c||r|r|r|@{~~}r|@{~~}r|} \hline
        Dataset   & \fl-0   & \fl-16  & \fl-32 & \fl-64 & \fl-128  \\
\hline\hline
\amaze	&	1.03 	&	1.10 	&	1.35 	&	1.41 	&	1.49 		\\ \hline
\human	&	9.01 	&	10.01 	&	11.50 	&	11.65 	&	11.67 		\\ \hline
\anthra	&	2.89 	&	2.99 	&	3.67 	&	3.72 	&	3.85 		\\ \hline
\agrocyc	&	2.96 	&	3.27 	&	3.63 	&	4.19 	&	3.96 		\\ \hline
\ecoo	&	3.08 	&	3.27 	&	3.91 	&	6.05 	&	4.00 		\\ \hline
\vchocyc	&	2.22 	&	2.37 	&	3.16 	&	3.47 	&	3.65 		\\ \hline
\kegg	&	1.11 	&	1.20 	&	1.62 	&	1.65 	&	1.39 		\\ \hline
\arxiv	&	4.71 	&	4.61 	&	6.34 	&	7.25 	&	8.44 		\\ \hline\hline
\mail	&	81.3 	&	77.2 	&	91.0 	&	86.5 	&	87.1 		\\ \hline
\lj	&	325.1 	&	327.3 	&	383.9 	&	376.3 	&	387.8 		\\ \hline
\web	&	178.5 	&	177.9 	&	203.8 	&	202.7 	&	207.6 		\\ \hline
\patten	&	801.3 	&	803.3 	&	862.0 	&	832.0 	&	862.1 		\\ \hline
\citeseerten	&	376.3 	&	385.3 	&	415.5 	&	419.1 	&	437.6 		\\ \hline
\patfive	&	1,517.8 	&	1,495.2 	&	1,518.4 	&	1,566.9 	&	1,577.2 		\\ \hline
\citeseerfive	&	775.6 	&	784.8 	&	840.0 	&	828.8 	&	899.1 		\\ \hline
\citeseerxx	&	4,063.9 	&	4,053.7 	&	4,500.5 	&	4,441.1 	&	4,562.0 		\\ \hline
\dbp	&	2,264.3 	&	2,371.0 	&	2,588.6 	&	2,607.1 	&	2,598.4 		\\ \hline
\pat	&	5,022.8 	&	5,152.1 	&	5,328.3 	&	5,400.1 	&	5,372.3 		\\ \hline
\twitter	&	6,287.6 	&	6,446.3 	&	7,236.2 	&	7,233.9 	&	7,719.2 		\\ \hline
\webuk	&	8,689.7 	&	8,774.7 	&	9,945.0 	&	9,991.6 	&	10,366.0 		\\ \hline

\end{tabular}
\end{table}

\begin{table}[t]
\caption{Comparison of the query time (ms).\label{table:reach-querytime}}

\centering



\begin{tabular} {|c||r|r|r|@{~~}r|@{~~}r|} \hline
        Dataset   & \fl-0   & \fl-16  & \fl-32 & \fl-64 & \fl-128  \\
\hline\hline
\amaze	&	592 	&	30 	&	28 	&	30 	&	30 		\\ \hline
\human	&	190 	&	37 	&	36 	&	35 	&	33 		\\ \hline
\anthra	&	133 	&	32 	&	32 	&	31 	&	32 		\\ \hline
\agrocyc	&	137 	&	33 	&	31 	&	31 	&	31 		\\ \hline
\ecoo	&	143 	&	35 	&	36 	&	36 	&	30 		\\ \hline
\vchocyc	&	126 	&	33 	&	31 	&	32 	&	28 		\\ \hline
\kegg	&	533 	&	45 	&	39 	&	40 	&	36 		\\ \hline
\arxiv	&	1,105 	&	594 	&	566 	&	554 	&	511 		\\ \hline\hline
\mail	&	8,091 	&	21 	&	24 	&	26 	&	40 		\\ \hline
\lj	&	50,811 	&	37 	&	42 	&	55 	&	83 		\\ \hline
\web	&	39,902 	&	58 	&	59 	&	71 	&	96 		\\ \hline
\patten	&	248 	&	277 	&	285 	&	304 	&	412 		\\ \hline
\citeseerten	&	347 	&	377 	&	409 	&	427 	&	509 		\\ \hline
\patfive	&	405 	&	475 	&	482 	&	528 	&	605 		\\ \hline
\citeseerfive	&	452 	&	485 	&	489 	&	533 	&	633 		\\ \hline
\citeseerxx	&	162,696 	&	575 	&	544 	&	554 	&	587 		\\ \hline
\dbp	&	25,344 	&	99 	&	104 	&	128 	&	176 		\\ \hline
\pat	&	13,180 	&	13,493 	&	13,566 	&	13,752 	&	13,898 		\\ \hline
\twitter	&	\----	&	88 	&	91 	&	117 	&	176 		\\ \hline
\webuk	&	\----	&	3,197 	&	3,230 	&	3,478 	&	3,701 		\\ \hline

\end{tabular}
\end{table}

\begin{figure}
  \centering
\includegraphics{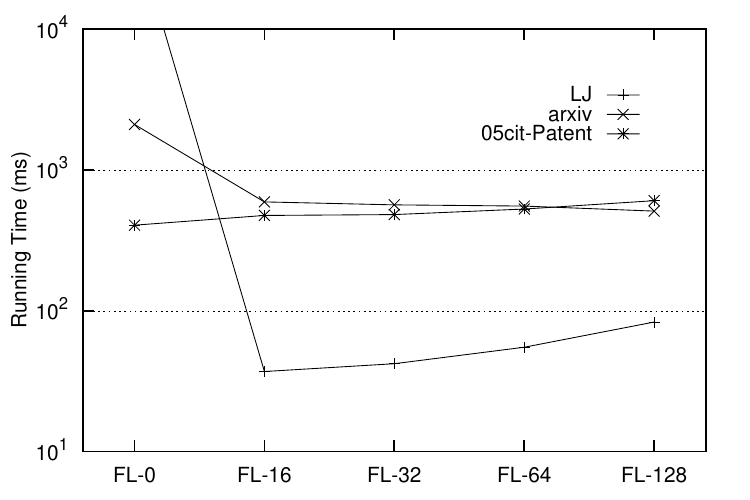}

\setlength{\abovecaptionskip}{-0.5pt}
\caption{Impacts of $k$ on query time (ms) over different datasets.}
\label{graph:rchquerytime}
\end{figure}

\section{Conclusion}
\label{section:conclusion}

Partial 2-hop label is a useful pruning technique for reachability queries processing. In practice, it may be powerful to answer most queries by a larger \rr for some graphs, but for other graphs, its pruning ability may not be as powerful as expected, or even makes query performance degenerated on some graphs, due to small \rr. In this paper, we aim at figuring out an important problem: whether we should use \phop for reachability queries processing for a given graph. To solve this problem, we formally defined the \rr problem and proposed a set of algorithms for efficient \rr computation. Our first experimental results show that compared with the baseline algorithm, our optimized algorithm can efficiently compute the \rr for a given graph.
Our second experimental results show that by combining \phop with an existing reachability algorithm, the query performance has different trends with the increase of the number of \hnodes $k$. And based on the second experimental results, we finally give out our findings on whether we should use \phop for reachability queries processing. Specifically, (1) for datasets with large \rr, \phop should be used with $k=16$; (2) for datasets with small \rr, we do not recommend using \phop; and (3) for the remaining datasets, \phop can be used, and users can determine $k$'s value themselves according to their requirements on index size, index construction time and query time.

\section{Acknowledgments}

This work was partly supported by the grants from the Natural Science Foundation of Shanghai (No. 20ZR1402700), and from the Natural Science Foundation of China (No.: 61472339, 61572421, 61272124). The authors would like to thank the anonymous referees for their insightful and valuable comments. 

\bibliographystyle{ieeetr}
\bibliography{sigproc}

\end{document}